\documentclass[onefignum,onetabnum]{siamart251216}
\usepackage{subfigure}    
\usepackage{epsfig,soul,comment,bm,booktabs,multirow}
\usepackage{lipsum}
\usepackage{amsfonts}
\usepackage{graphicx,epstopdf}
\usepackage{algorithmic}
\usepackage{amsopn}
 
\ifpdf
  \DeclareGraphicsExtensions{.eps,.pdf,.png,.jpg}
\else
  \DeclareGraphicsExtensions{.eps}
\fi

\soulregister{\ref}{7}
\soulregister{\cite}{7}

\crefname{hypothesis}{Hypothesis}{Hypotheses}

\newsiamremark{remark}{Remark}

\newtheorem{assumption}{Assumption}[section]

\headers{DD-ASD for Wetting Transitions}{T. Xu, L. Zhang, and Y. Zhang}

\title{Structure-preserving diffuse-domain accelerated saddle dynamics for wetting transitions\thanks{Submitted to the editors DATE.
\funding{L.~Zhang was supported by the National Natural Science Foundation of China (No.~12225102, T2321001, and 12288101) and the National Key Research and Development Program of China 2024YFA0919500. Y.~Zhang was supported by the National Natural Science Foundation of China (No.~12301554).}}}

\author{Tianye Xu\thanks{Ministry of Education Key Laboratory of NSLSCS, Nanjing Normal University, Nanjing, People's Republic of China (\email{250902012@njnu.edu.cn}).}
\and Lei Zhang\thanks{School of Mathematical Sciences, Beijing International Center for Mathematical Research, Center for Quantitative Biology, Center for Machine Learning Research, Institute for Artificial Intelligence, Peking University, Beijing 100871, China (\email{zhangl@math.pku.edu.cn}).}
\and Yuze Zhang\thanks{Ministry of Education Key Laboratory of NSLSCS, Nanjing Normal University, Nanjing, People's Republic of China (\email{05439@njnu.edu.cn}).}}

\begin{document}

\maketitle

\begin{abstract}
Wetting transitions on textured substrates play a central role in the design of functional surfaces, but resolving their transition mechanisms requires efficient exploration of complex energy landscapes. In particular, saddle points are essential for revealing the connectivity among metastable states and transition pathways. In this work, we develop a diffuse-domain accelerated saddle dynamics (DD-ASD) framework for mass-constrained wetting transitions on complex textured substrates. 
The diffuse-domain formulation embeds complex solid geometries into a Cartesian grid, avoiding body-fitted mesh construction in repeated saddle-point searches. 
An orthogonal projection is applied consistently to the phase-field state and unstable-direction dynamics, preserving the prescribed droplet mass at the discrete level. 
Momentum acceleration is incorporated into the saddle dynamics to improve the efficiency of high-index saddle searches. 
Under the stated local spectral and exact-eigenspace assumptions, we establish a local convergence rate of $1-\mathcal{O}(1/\sqrt{\kappa})$ on the effective mass-conserving subspace. Numerical experiments demonstrate the effectiveness of the proposed method in resolving wetting solution landscapes, transition pathways, and energy barriers on textured substrates. We further extend the framework to fully three-dimensional wetting-landscape computations.
\end{abstract}

\begin{keywords}
  wetting transitions, high-index saddle dynamics, diffuse-domain, phase-field model, structure-preserving scheme, momentum acceleration.
\end{keywords}

\begin{MSCcodes}
65M06, 65M12, 65K10, 76D45
\end{MSCcodes}

\section{Introduction}
Wetting transitions on textured substrates are central to the design of functional surfaces, including self-cleaning, anti-icing, water-harvesting, and directional-transport materials \cite{barthlott1997purity, quere2008wetting, kreder2016design, park2016condensation, zheng2010directional, feng2021capillary}. A droplet on a rough surface may exhibit several metastable morphologies between the Cassie-Baxter and Wenzel states \cite{cassie1944contact, wenzel1936resistance}. Understanding how these states are connected requires resolving not only local minimizers but also saddle points on the underlying wetting energy landscape \cite{bormashenko2015progress, koishi2009morphological, ren2014wetting}.

Phase-field models provide a convenient variational framework for wetting problems because topological changes in the liquid interface can be represented without explicit tracking of the interface \cite{cahn1958free, qian2003molecular, xu2011phase}. In such models, wetting transitions can be understood as minimum energy paths (MEPs). An index-1 saddle point characterizes a transition state connecting two local minima along a MEP, while high-index saddle points connect lower-index saddle points and thereby organize the hierarchical structure of the solution landscape \cite{milnor1963morse, yin2020PRL}. High-index saddle dynamics (HiSD) \cite{yin2019high, yin2021searching} has therefore become a useful tool for computing saddle points of prescribed Morse indices and constructing solution landscapes for complex systems including liquid crystals \cite{shi2023hierarchies, wu2026prl}, block copolymers \cite{zhang2026prr}, Bose--Einstein condensates \cite{yin2023revealing}, and jammed granular matter \cite{wu2026jamming}.

For wetting transitions on complex textures, Zhang et al.~\cite{zhang2025solution} constructed two-dimensional (2D) droplet landscapes using HiSD coupled with a body-fitted finite-element discretization. 
That study resolved several important transition mechanisms, but the use of unstructured meshes and assembled finite-element operators makes repeated saddle-point computations increasingly costly as the geometry becomes more complex and the problem size increases. 
In addition, the previous implementation required separate treatment of the prescribed volume constraint. 
These costs become more significant in systematic parameter studies and three-dimensional (3D) calculations.

The diffuse-domain method and related domain-embedding methods provide an alternative way to handle complex geometries by representing the physical domain within a simple computational region~\cite{glowinski1994fictitious,qian2025}. 
This allows Cartesian-grid discretizations to be used without constructing a body-fitted mesh for each substrate geometry. 
For the present problem, the main difficulty is to incorporate this geometric representation into a mass-constrained saddle-point search while retaining the structure and efficiency required for solution-landscape computations.

This work develops a diffuse-domain accelerated saddle dynamics (DD-ASD) method for computing wetting transitions and energy barriers on complex textured substrates. 
The diffuse-domain formulation represents the solid geometry on a Cartesian grid, while the mass constraint is enforced by an orthogonal projection applied to both the state and unstable-direction dynamics. 
Momentum acceleration is then incorporated into the saddle search to reduce the cost associated with ill-conditioned phase-field systems. 
The aim is to provide a practical framework for systematic transition-pathway and energy-barrier calculations without relying on body-fitted discretizations for each complex substrate geometry.

The main contributions are as follows:
\begin{itemize}
    \item We develop a diffuse-domain accelerated solution-landscape framework for mass constrained wetting transitions on complex textured substrates. 
    The Cartesian-grid formulation avoids body-fitted mesh construction in repeated saddle-point calculations.

    \item We apply the mass-conserving projection consistently to both the phase-field state and unstable-direction dynamics. 
    The resulting discrete iteration preserves the prescribed mass in exact arithmetic and to machine precision in computation.

    \item Under the stated local spectral and exact-eigenspace assumptions, we establish a local convergence rate of $1-\mathcal{O}(1/\sqrt{\kappa})$ for the momentum-accelerated state iteration in the effective mass-conserving subspace.
    \item We verify the numerical reliability of DD-ASD in 2D through step-size self-convergence, mass conservation, and mesh sensitivity. The computational efficiency of the framework makes fully 3D wetting-landscape calculations practical.

\end{itemize}

The rest of the paper is organized as follows. Section 2 introduces the phase-field energy functional in the diffuse domain and formulates the mass-conserving DD-ASD algorithm. Section 3 establishes the spectral structure analysis and the Lipschitz continuity of the eigen-subspace, and presents the local convergence theory and the proof of the accelerated rate for the discrete scheme. Section 4 provides numerical experiments, including accuracy validation, efficiency comparisons with FEM, benchmark constructions of solution landscapes in both 2D and 3D. Concluding remarks are given in Section 5.

\section{Diffuse-domain accelerated saddle dynamics for wetting transition}

\subsection{Diffuse-domain formulation of wetting energy}

For a physical fluid domain $\Omega_f$ with a solid boundary $\partial\Omega_f$, the standard Ginzburg-Landau free energy functional of the phase-field system with wetting effects is written as:
\begin{equation}
\mathcal{E}_0[u] = \int_{\Omega_f} \left[ \frac{\epsilon}{2} |\nabla u|^2 + \frac{1}{\epsilon} F(u) \right] d\mathbf{x} 
+ \int_{\partial\Omega_f} W(u) \, dS,
\end{equation}
where the functions $F(u)$ and $W(u)$ are defined as:
\begin{equation}
F(u) = \frac{1}{4}(u^2 - 1)^2, \quad W(u) = -\frac{\sqrt{2}}{2}\cos\theta\left(u - \frac{u^3}{3}\right).
\end{equation}


%
Following the domain-embedding framework
in~\cite{qian2025}, we embed the physical fluid domain $\Omega_f$
into a regular rectangular computational domain $\Omega$.
The geometry can be described by the characteristic function of
$\Omega_f$. For the continuous variational formulation, we use
its smooth approximation
\begin{equation}
\label{eq:psi_smooth}
\psi_{\varepsilon_g}(\mathbf{x})
=
\frac{1}{2}
\left(
1-\tanh\!\left(\frac{3r(\mathbf{x})}{\varepsilon_g}\right)
\right),
\end{equation}
where $r(\mathbf{x})$ denotes the signed distance to the solid boundary $\partial\Omega_f$. We take $r<0$ in $\Omega_f$ and $r>0$ outside $\overline{\Omega_f}$.
The parameter $\varepsilon_g>0$ controls the width of the diffuse geometric transition. As $\varepsilon_g\to0$, $\psi_{\varepsilon_g}$ approaches the characteristic function of $\Omega_f$ away from the boundary, while $|\nabla\psi_{\varepsilon_g}|$ provides a smooth approximation of the surface delta function. This allows the bulk and boundary contributions to be written as integrals over $\Omega$ in the continuous energy. Throughout the continuous formulation, $\psi_{\varepsilon_g}$ is assumed to have the regularity required for the variations below. For notational simplicity, we write $\psi=\psi_{\varepsilon_g}$.

In the numerical discretization, we use the characteristic function directly to represent the geometry on the Cartesian grid:
\begin{equation}
\psi_b(\mathbf{x})
=
\begin{cases}
1, & \mathbf{x}\in\Omega_f,\\
0, & \mathbf{x}\in\Omega\setminus\Omega_f.
\end{cases}
\end{equation}
This binary mask assigns fluid and solid grid points the limiting values of the smooth indicator, without introducing an additional diffuse geometric layer. It is used in the discrete bulk terms and mass constraint, while the boundary contribution is discretized separately below. The discrete energy and the subsequent iteration analysis are defined directly using this representation and do not require the binary mask to be differentiable. Thus, $\psi$ denotes the smooth indicator used in the continuous variational formulation, whereas $\psi_b$ denotes the binary mask used in the discrete computations.

Before imposing the mass constraint, we derive the unconstrained gradient and the Hessian of $\mathcal{E}$.

\noindent\textbf{Unprojected gradient.}
For an arbitrary test direction $v$, integration by parts under the periodic boundary conditions on $\Omega$, for which the boundary contributions cancel, gives
\begin{equation}
D\mathcal{E}(u)[v]
=
\int_{\Omega}
\left[
-\epsilon\nabla\cdot(\psi\nabla u)
+\frac{1}{\epsilon}\psi F'(u)
+|\nabla\psi|W'(u)
\right]v\,d\mathbf{x}.
\end{equation}
Therefore, the unprojected $L^2$ gradient is
\begin{equation}
\label{eq:grad_unproj}
\nabla\mathcal{E}(u)
=
-\epsilon\nabla\cdot(\psi\nabla u)
+\frac{1}{\epsilon}\psi F'(u)
+|\nabla\psi|W'(u).
\end{equation}

\noindent\textbf{Unprojected Hessian.}
Differentiating \eqref{eq:grad_unproj} in the direction $v$ gives
\begin{equation}
\label{eq:hess_unproj}
\mathcal{H}_0(u)v
=
-\epsilon\nabla\cdot(\psi\nabla v)
+\frac{1}{\epsilon}\psi F''(u)v
+|\nabla\psi|W''(u)v,
\end{equation}
where
\[
F''(u)=3u^2-1,
\qquad
W''(u)=\sqrt{2}\cos\theta\,u.
\]

\subsubsection{Intrinsic Mass Conservation and Invariant Subspaces}

For the mass-constrained wetting model considered here, mass conservation is imposed in the diffuse-domain formulation through
\begin{equation}
M_{\psi}(u(t)):=\int_{\Omega}u(\mathbf{x},t)\psi(\mathbf{x})\,d\mathbf{x}=M_0.
\end{equation}
Since the fixed geometry implies $\psi_t=0$, differentiation gives
\begin{equation}
\int_{\Omega}u_t\psi\,d\mathbf{x}
=
\langle u_t,\psi\rangle
=
0.
\end{equation}
Thus, $u_t$ is orthogonal to the mass direction $\psi$. For the binary mask $\psi_b=\chi_{\Omega_f}$, this weighted mass coincides with the physical-domain mass:
\[
\int_{\Omega}u\psi_b\,d\mathbf{x}
=
\int_{\Omega_f}u\,d\mathbf{x}.
\]

To enforce this constraint, we introduce a spatially independent Lagrange multiplier $\lambda(t)$ into the standard unprojected gradient flow:
\begin{equation}
u_t = -\nabla\mathcal{E}(u) + \lambda \psi.
\end{equation}

By taking the inner product of both sides with $\psi$ and enforcing the constraint $\langle u_t, \psi \rangle = 0$, the exact Lagrange multiplier is analytically determined as:
\begin{equation}
0 = -\langle \nabla\mathcal{E}(u), \psi \rangle + \lambda \langle \psi, \psi \rangle \quad \Longrightarrow \quad \lambda = \frac{\langle \nabla\mathcal{E}(u), \psi \rangle}{\|\psi\|^2}.
\end{equation}

Substituting $\lambda$ back into the gradient flow gives the mass-conserving dynamics:
\begin{equation} \label{eq:gradient}
u_t = -\left(\nabla\mathcal{E}(u) - \frac{\langle \nabla\mathcal{E}(u), \psi \rangle}{\|\psi\|^2}\psi \right) =: -\nabla_{\mathcal{P}}\mathcal{E}(u).
\end{equation}

This operation naturally gives rise to the continuous orthogonal projection operator $\mathcal{P}: L^2(\Omega) \to \mathcal{V}^c_{\text{eff}}$, where the effective mass-conserving subspace is defined by
\begin{equation}
\mathcal{V}^c_{\text{eff}} := \left\{ v \in L^2(\Omega) \;\middle|\; \int_{\Omega} v(\mathbf{x})\psi(\mathbf{x})\,d\mathbf{x} = 0 \right\}.
\end{equation}
The action of $\mathcal{P}$ on any test function $v$ is explicitly given by
\begin{equation}
\mathcal{P} v = v - \frac{\langle v, \psi \rangle}{\|\psi\|^2}\psi, \quad \text{where } \langle v, \psi \rangle = \int_{\Omega} v(\mathbf{x})\psi(\mathbf{x}) d\mathbf{x}.
\end{equation}

Consequently, the constrained gradient is $\nabla_{\mathcal P}\mathcal E(u)=\mathcal P\nabla\mathcal E(u)$. The component removed by $\mathcal P$ is the Lagrange-multiplier contribution in the mass direction. Because the multiplier is evaluated explicitly, no additional coupled saddle-point system is required.

Since $\mathcal P$ is independent of $u$, the Hessian restricted to the
mass-conserving subspace is
\begin{equation}
\label{eq:hess_proj}
\nabla_{\mathcal P}^2\mathcal E(u)
=
\mathcal P\mathcal H_0(u)\mathcal P.
\end{equation}

In the continuous-time and continuous-space setting, we formulate the mass-conserving Diffuse-Domain Saddle Dynamics system for computing index-$k$ saddle points. Let $u(\mathbf{x}, t) \in H^1(\Omega)$ be the continuous phase-field state, and let $\{v_i(\mathbf{x}, t)\}_{i=1}^k \subset H^1(\Omega)$ be the $k$ orthonormal unstable search directions. The coupled continuous dynamical system is written as:
\begin{equation} \label{eq:continuous_ph_sd}
\begin{cases}
\partial_t u = -\mathcal{P} \nabla\mathcal{E}(u) + 2 \sum_{i=1}^k \langle \mathcal{P} \nabla\mathcal{E}(u), v_i \rangle v_i, \\
\partial_t v_i = -\mathcal{P} \mathcal{H}_0(u) v_i + \langle \mathcal{P} \mathcal{H}_0(u) v_i, v_i \rangle v_i + 2 \sum_{j=1}^{i-1} \langle \mathcal{P} \mathcal{H}_0(u) v_i, v_j \rangle v_j, \quad i = 1, \dots, k,
\end{cases}
\end{equation}
where $\langle \cdot, \cdot \rangle$ denotes the standard $L^2(\Omega)$ inner product, $\mathcal{P}$ is the continuous projection operator, and $v_i$ satisfy the continuous orthonormality conditions $\langle v_i, v_j \rangle = \delta_{ij}$ and the mass-conserving constraints $\langle v_i, \psi \rangle = 0$ for all $1 \le i, j \le k$.

\subsection{Diffuse-domain accelerated saddle dynamics}
Before presenting the discrete iteration, we define the discrete energy and its associated operators. On a uniform Cartesian grid with spacing $h$ in spatial dimension $d\in\{2,3\}$, let $N$ denote the total number of grid points. The phase field and binary fluid-domain mask are represented by $\mathbf u,\bm\psi_b\in\mathbb R^N$, respectively. The
discrete energy is
\begin{equation}
\label{eq:discrete_energy}
\mathcal E_h(\mathbf u)
=
h^d\left[
\frac{\epsilon}{2}\mathbf u^\top\mathbf L_{\psi,h}\mathbf u
+\frac{1}{\epsilon}\bm\psi_b^\top\mathbf F(\mathbf u)
+\mathbf d^\top\mathbf W(\mathbf u)
\right],
\end{equation}
where $\mathbf L_{\psi,h}$ is the discrete negative diffusion operator. $\mathbf F$ and $\mathbf W$ are evaluated componentwise, and $\mathbf d\in\mathbb R^N$ contains the discrete solid--fluid interface weights.

To specify $\mathbf d$ in two dimensions, let $\psi_{i,j}\in\{0,1\}$ denote the binary mask at $(x_j,y_i)$, where the first and second indices correspond to the $y$- and $x$-directions, respectively.  Let $D=(d_{i,j})$ denote the boundary-weight matrix, and set
$\mathbf d=\operatorname{vec}(D)$. Define the fluid-side interface
indicators by
\[
b_{i,j}=\psi_{i,j}(1-\psi_{i-1,j}),\qquad
t_{i,j}=\psi_{i,j}(1-\psi_{i+1,j}),
\]
\[
\ell_{i,j}=\psi_{i,j}(1-\psi_{i,j-1}),\qquad
r_{i,j}=\psi_{i,j}(1-\psi_{i,j+1}).
\]
The entries of $D$ are defined by

\begin{equation}
d_{i,j}
=
\frac{b_{i,j}+t_{i,j}}{h}
+
\frac{(1-b_{i,j})(\ell_{i,j}+r_{i,j})}{h}.
\end{equation}
Here, mask values outside the computational grid are set to zero. The factor $1-b_{i,j}$ implements the corner convention used in our discretization: nodes adjacent to solid material below receive no additional left or right contribution.

We construct $\mathbf L_{\psi,h}$ using paired forward and backward differences under periodic boundary conditions. In two dimensions, the forward differences are
\begin{equation}\label{eq:forward_diff}
\partial_x^+u_{i,j}
=\frac{u_{i,j+1}-u_{i,j}}{h},
\qquad
\partial_y^+u_{i,j}
=\frac{u_{i+1,j}-u_{i,j}}{h},
\end{equation}
with fluxes
\[
q_{x,i,j}=(\psi_b)_{i,j}\partial_x^+u_{i,j},
\qquad
q_{y,i,j}=(\psi_b)_{i,j}\partial_y^+u_{i,j}.
\]
The corresponding backward differences are
\begin{equation}\label{eq:backward_diff}
\partial_x^-q_{x,i,j}
=\frac{q_{x,i,j}-q_{x,i,j-1}}{h},
\qquad
\partial_y^-q_{y,i,j}
=\frac{q_{y,i,j}-q_{y,i-1,j}}{h},
\end{equation}
and the discrete negative diffusion operator is
\[
(\mathbf L_{\psi,h}\mathbf u)_{i,j}
=
-\partial_x^-q_{x,i,j}
-\partial_y^-q_{y,i,j}.
\]
All indices are interpreted periodically. Equivalently, in matrix form,
\[
\mathbf L_{\psi,h}
=
\sum_{\ell=1}^{d}
(\mathbf D_\ell^+)^\top
\operatorname{diag}(\bm\psi_b)\mathbf D_\ell^+,
\]
where $\mathbf D_\ell^+$ is the periodic forward-difference matrix in coordinate direction $\ell$. Thus, $\mathbf L_{\psi,h}$ is symmetric positive semidefinite. The 3D construction includes the corresponding differences in the third coordinate.

We define the discrete gradient with respect to
$\langle\mathbf a,\mathbf b\rangle_h:=h^d\mathbf a^\top\mathbf b$, so that
\[
D\mathcal E_h(\mathbf u)[\mathbf v]
=
\langle\nabla_h\mathcal E_h(\mathbf u),\mathbf v\rangle_h.
\]
The resulting gradient and Hessian are
\[
\begin{aligned}
\nabla_h\mathcal E_h(\mathbf u)
&=
\epsilon\mathbf L_{\psi,h}\mathbf u
+\frac{1}{\epsilon}\bm\psi_b\odot\mathbf F'(\mathbf u)
+\mathbf d\odot\mathbf W'(\mathbf u),\\
\nabla_h^2\mathcal E_h(\mathbf u)
&=
\epsilon\mathbf L_{\psi,h}
+\operatorname{diag}\left(
\frac{1}{\epsilon}\bm\psi_b\odot\mathbf F''(\mathbf u)
+\mathbf d\odot\mathbf W''(\mathbf u)
\right),
\end{aligned}
\]
where $\odot$ denotes componentwise multiplication.

For the discrete mass constraint $h^d\mathbf u^\top\bm\psi_b=M_0$, the corresponding mass conserving projection matrix is
\[
\mathbf P_0
=
\mathbf I
-
\frac{\bm\psi_b\bm\psi_b^\top}
{\bm\psi_b^\top\bm\psi_b}.
\]
Since the grid weight $h^d$ is uniform, $\mathbf P_0$ is orthogonal with respect to both the Euclidean and discrete $L^2$ inner products. The projected gradient and Hessian are
\begin{equation}\label{eq:discrete_grad_proj}
\begin{aligned}
\nabla_{\mathbf P}\mathcal E_h(\mathbf u)
&=\mathbf P_0\nabla_h\mathcal E_h(\mathbf u),\\
\nabla_{\mathbf P}^2\mathcal E_h(\mathbf u)
&=\mathbf P_0\nabla_h^2\mathcal E_h(\mathbf u)\mathbf P_0.
\end{aligned}
\end{equation}

Guided by the projected structure of \eqref{eq:continuous_ph_sd},
we define the fully discrete DD-ASD iteration using the binary-mask
discretization introduced above. To discretize \eqref{eq:continuous_ph_sd}, we use an explicit Euler scheme and add the heavy-ball momentum term $\gamma(\mathbf u^n-\mathbf u^{n-1})$ to accelerate the state iteration. The complete discrete scheme is
\begin{equation}
\label{eq:discrete_coupled_system}
\begin{cases}
\displaystyle
\mathbf{u}^{n+1}
=
\mathbf{u}^{n}
+\beta\mathbf{g}^{n}
+\gamma(\mathbf{u}^{n}-\mathbf{u}^{n-1}),
\\[1ex]
\displaystyle
\widehat{\mathbf{v}}_i^{\,n+1}
=
\mathbf{v}_i^n+\beta\mathbf{g}_i^n,
\qquad i=1,\ldots,k,
\\[1ex]
\displaystyle
\widetilde{\mathbf{v}}_i^{\,n+1}
=
\widehat{\mathbf{v}}_i^{\,n+1}
-
\sum_{j=1}^{i-1}
\left(
(\mathbf{v}_j^{n+1})^\top
\widehat{\mathbf{v}}_i^{\,n+1}
\right)
\mathbf{v}_j^{n+1},
\qquad i=1,\ldots,k,
\\[1ex]
\displaystyle
\mathbf{v}_i^{n+1}
=
\frac{\widetilde{\mathbf{v}}_i^{\,n+1}}
{\left\|\widetilde{\mathbf{v}}_i^{\,n+1}\right\|},
\qquad i=1,\ldots,k,
\\[2ex]
\displaystyle
l^{n+1}
=
\max\left\{
\frac{l^n}{1+\beta},
\epsilon_0
\right\}.
\end{cases}
\end{equation}
The direction vectors are updated sequentially in increasing order of $i$, with the sum in the third equation understood to be zero when $i=1$. Here, $\beta>0$ is the step size and $\gamma$ is the momentum parameter. We choose $l^0\geq\epsilon_0>0$. Unless otherwise stated, vector norms and direction orthonormality are Euclidean; $\|\cdot\|_2$ denotes the Euclidean vector norm or its induced matrix norm. In the numerical experiments, we write $\Delta t=\beta$.

The discrete update directions $\mathbf g^n$ and $\mathbf g_i^n$ are defined as follows:
\begin{equation} \label{eq:gn_discrete}
\mathbf{g}^n = -\mathbf{P}_0 \nabla_h \mathcal{E}_h(\mathbf{u}^n) + 2 \sum_{i=1}^k \left( (\mathbf{v}_i^n)^\top \mathbf{P}_0 \nabla_h \mathcal{E}_h(\mathbf{u}^n) \right) \mathbf{v}_i^n,
\end{equation}
\begin{equation} \label{eq:gin_discrete}
\mathbf{g}_i^n = -\mathbf{P}_0 \mathbf{H}^n_i + \left( (\mathbf{v}_i^n)^\top \mathbf{P}_0 \mathbf{H}^n_i \right) \mathbf{v}_i^n + 2 \sum_{j=1}^{i-1} \left( (\mathbf{v}_j^n)^\top \mathbf{P}_0 \mathbf{H}^n_i \right) \mathbf{v}_j^n.
\end{equation}
Here,
\[
\mathbf H_i^n:=\mathbf H(\mathbf u^{n+1},\mathbf v_i^n,l^n)
\]
is the centered finite-difference approximation of the projected Hessian--vector product:
 \begin{equation}
\begin{aligned}\label{eq:hess_fd_discrete}
\mathbf{H}(\mathbf{u}^{n+1}, \mathbf{v}_i^n, l^n) 
&= \frac{\mathbf{P}_0\nabla_h\mathcal{E}_h(\mathbf{u}^{n+1} + l^n\mathbf{v}_i^n) 
      - \mathbf{P}_0\nabla_h\mathcal{E}_h(\mathbf{u}^{n+1} - l^n\mathbf{v}_i^n)}{2l^n}\\
&\approx \nabla^2_{\mathbf{P}}\mathcal{E}_h(\mathbf{u}^{n+1})\,\mathbf{v}_i^n.
\end{aligned}
\end{equation}
Since the orthonormal eigenvectors $\mathbf{v}_i^n$ are strictly maintained within the discrete mass-conserving subspace satisfying $(\mathbf{v}_i^n)^\top\bm\psi_b = 0$, and because the Hessian-vector product $\mathbf{H}^n_i$ is already projected ($\mathbf{P}_0 \mathbf{H}^n_i = \mathbf{H}^n_i$), the update directions in \eqref{eq:gn_discrete} and \eqref{eq:gin_discrete} can be simplified into the following compact matrix-vector forms:
\begin{equation}
\mathbf{g}^n = -\left( \mathbf{I} - 2 \sum_{i=1}^k \mathbf{v}_i^n (\mathbf{v}_i^n)^\top \right) \nabla_\mathbf{P} \mathcal{E}_h(\mathbf{u}^n),
\end{equation}
\begin{equation}
\mathbf{g}_i^n = -\left( \mathbf{I} - \mathbf{v}_i^n (\mathbf{v}_i^n)^\top - 2 \sum_{j=1}^{i-1} \mathbf{v}_j^n (\mathbf{v}_j^n)^\top \right) \mathbf{H}(\mathbf{u}^{n+1}, \mathbf{v}_i^n, l^n),
\end{equation}

Starting from $\mathbf{u}^{-1}=\mathbf{u}^0$ and a set of orthonormal
directions $\{\mathbf{v}_i^0\}_{i=1}^k$ satisfying
\[
(\mathbf{v}_i^0)^\top\mathbf{v}_j^0=\delta_{ij},
\qquad
(\mathbf{v}_i^0)^\top\bm{\psi}_b=0,
\]
we apply \eqref{eq:discrete_coupled_system} for $n=0,1,\ldots$ until
\[
\left\|
\nabla_{\mathbf P}\mathcal{E}_h(\mathbf{u}^n)
\right\|
<
\epsilon_{\mathrm{tol}}.
\]

\subsection{Mass Conservation and the Constrained Subspace}

On a uniform Cartesian grid with spacing $h$ in spatial dimension $d\in\{2,3\}$, define the discrete phase-field mass by
\[
M_h(\mathbf u)
:=
h^d\mathbf u^\top\bm\psi_b,
\qquad
M_0:=M_h(\mathbf u^0),
\]
where $\bm\psi_b\in\mathbb R^N$ is the fixed binary mask of the fluid domain. The mass constraint is therefore $M_h(\mathbf u)=M_0$. Its associated tangent subspace is
\begin{equation}
\label{eq:effective_subspace}
\mathcal V_{\mathrm{eff}}
=
\operatorname{span}\{\bm\psi_b\}^{\perp}
=
\left\{
\mathbf w\in\mathbb R^N
\;\middle|\;
\mathbf w^\top\bm\psi_b=0
\right\}.
\end{equation}

By the definition of the discrete projection operator $\mathbf{P}_0 = \mathbf{I} - \frac{\bm{\psi}_b\bm{\psi}_b^\top}{\bm{\psi}_b^\top\bm{\psi}_b}$, we have $\mathbf{P}_0 \bm{\psi}_b = \mathbf{0}$. Thus, for any discrete state vector $\mathbf{u}$, the projected Hessian matrix satisfies:
\begin{equation}
\nabla^2_{\mathbf{P}} \mathcal{E}_h(\mathbf{u}) \bm{\psi}_b = \mathbf P_0\nabla_h^2\mathcal E_h(\mathbf u) (\mathbf{P}_0 \bm{\psi}_b) = \mathbf{0}.
\end{equation}
Thus, $\bm\psi_b$ is a zero eigenvector of the projected Hessian. The following theorem shows that the state increments and search directions remain in $\mathcal V_{\mathrm{eff}}$, so the state iterates preserve the prescribed mass.
\begin{theorem}[Discrete mass conservation]
\label{thm:mass_conservation}
Let $d\in\{2,3\}$ denote the spatial dimension and let $\bm\psi_b\neq\mathbf 0$ be fixed. On a uniform Cartesian grid with spacing $h$, define the discrete mass by
\[
M_h(\mathbf u):=h^d\mathbf u^\top\bm\psi_b.
\]
Suppose that $\mathbf u^{-1}=\mathbf u^0$, $M_h(\mathbf u^0)=M_0$, and the initial directions satisfy
\[
(\mathbf v_i^0)^\top\mathbf v_j^0=\delta_{ij},
\qquad
(\mathbf v_i^0)^\top\bm\psi_b=0,
\qquad 1\leq i,j\leq k.
\]
Assume that all orthonormalization steps are well defined. Then, in exact arithmetic, the recurrence in \eqref{eq:discrete_coupled_system} preserves
\[
M_h(\mathbf u^n)=M_0,
\qquad
(\mathbf v_i^n)^\top\bm\psi_b=0,
\qquad n\geq0,\quad 1\leq i\leq k.
\]
\end{theorem}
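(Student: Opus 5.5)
We argue by strong induction on $n$, proving jointly the two invariants
\[
\mathbf u^{n}-\mathbf u^{n-1}\in\mathcal V_{\mathrm{eff}},
\qquad
\mathbf v_i^{n}\in\mathcal V_{\mathrm{eff}}\quad(1\le i\le k).
\]
Mass conservation follows from the first invariant, since $M_h$ is linear: $M_h(\mathbf u^{n+1})-M_h(\mathbf u^n)=h^d(\mathbf u^{n+1}-\mathbf u^n)^\top\bm\psi_b$. The base case $n=0$ holds by hypothesis. Indeed, $\mathbf u^{0}-\mathbf u^{-1}=\mathbf 0$, and $(\mathbf v_i^0)^\top\bm\psi_b=0$ is assumed. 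The only structural facts we use are $\bm\psi_b^\top\mathbf P_0=\mathbf 0^\top$, so that $\operatorname{range}(\mathbf P_0)=\mathcal V_{\mathrm{eff}}$, and the fact that $\mathcal V_{\mathrm{eff}}$ is a linear subspace.

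\textbf{State update.} Assume the invariants at step $n$. In \eqref{eq:gn_discrete}, the vector $\mathbf g^n$ is a linear combination of $\mathbf P_0\nabla_h\mathcal E_h(\mathbf u^n)\in\mathcal V_{\mathrm{eff}}$ and the vectors $\mathbf v_i^n\in\mathcal V_{\mathrm{eff}}$. Hence $\bm\psi_b^\top\mathbf g^n=0$. It follows that
\[
\bm\psi_b^\top(\mathbf u^{n+1}-\mathbf u^n)=\beta\,\bm\psi_b^\top\mathbf g^n+\gamma\,\bm\psi_b^\top(\mathbf u^n-\mathbf u^{n-1})=0 .
\]
This gives the first invariant at step $n+1$, and therefore $M_h(\mathbf u^{n+1})=M_h(\mathbf u^n)=M_0$. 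The momentum term is harmless precisely because the increment invariant is carried along in the induction. This is the reason for inducting on increments rather than on $\mathbf u^n$ itself, and for the initialization $\mathbf u^{-1}=\mathbf u^0$.

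\textbf{Direction update.} Next we treat $\mathbf v_i^{n+1}$ by an inner induction on $i$, following the sequential order of \eqref{eq:discrete_coupled_system}. The finite-difference Hessian product $\mathbf H_i^n$ in \eqref{eq:hess_fd_discrete} is a difference of two vectors in $\operatorname{range}(\mathbf P_0)$, scaled by $1/(2l^n)$, which is finite because $l^n\ge\epsilon_0>0$. So $\mathbf H_i^n\in\mathcal V_{\mathrm{eff}}$, no matter that it is evaluated at $\mathbf u^{n+1}\pm l^n\mathbf v_i^n$. Then $\mathbf g_i^n$ in \eqref{eq:gin_discrete} is a combination of $\mathbf P_0\mathbf H_i^n$ and the vectors $\mathbf v_j^n$, $j\le i$, all of which lie in $\mathcal V_{\mathrm{eff}}$. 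Hence $\widehat{\mathbf v}_i^{\,n+1}=\mathbf v_i^n+\beta\mathbf g_i^n\in\mathcal V_{\mathrm{eff}}$. The Gram--Schmidt step subtracts multiples of $\mathbf v_j^{n+1}$ with $j<i$; these already lie in $\mathcal V_{\mathrm{eff}}$ by the inner induction hypothesis. Thus $\widetilde{\mathbf v}_i^{\,n+1}\in\mathcal V_{\mathrm{eff}}$. Normalization is well defined by assumption, so $\mathbf v_i^{n+1}\in\mathcal V_{\mathrm{eff}}$. Orthonormality of $\{\mathbf v_i^{n+1}\}$ holds by the construction of Gram--Schmidt, though mass conservation does not need it.

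\textbf{Main point of care.} There is no real analytic obstacle here. What needs care is the ordering of the dependencies. $\mathbf H_i^n$ uses $\mathbf u^{n+1}$, so the state step must be handled first. The Gram--Schmidt step uses the newly computed $\mathbf v_j^{n+1}$, which forces the inner induction on $i$. Finally, the well-posedness of the division by $2l^n$ and by $\|\widetilde{\mathbf v}_i^{\,n+1}\|$ must be noted. With these points in place the induction closes, and the theorem follows.
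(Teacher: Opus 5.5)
Your proof is correct and follows essentially the same route as the paper. Both arguments show that $\mathbf H_i^n$ and $\mathbf g_i^n$ lie in $\operatorname{range}(\mathbf P_0)$, propagate $\bm\psi_b^\top\mathbf v_i^n=0$ through the Gram--Schmidt step by an inner induction on $i$, and neutralize the momentum term by tracking increments. The paper organizes this slightly differently: it proves direction invariance first (which does not depend on the state), then iterates the scalar recurrence $m_{n+1}-m_n=\gamma(m_n-m_{n-1})$ from $m_{-1}=m_0$, which is exactly your increment invariant.
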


\begin{proof}
Since $\mathbf P_0^\top=\mathbf P_0$ and $\mathbf P_0\bm\psi_b=\mathbf 0$, the projected gradient satisfies
\begin{equation}\label{eq:grad_orth}
\bm\psi_b^\top\nabla_{\mathbf P}\mathcal E_h(\mathbf u)
=
\bm\psi_b^\top\mathbf P_0\nabla_h\mathcal E_h(\mathbf u)
=0.
\end{equation}
Likewise, for any $\mathbf w$,
\begin{equation}\label{eq:hess_orth}
\bm\psi_b^\top\nabla_{\mathbf P}^2\mathcal E_h(\mathbf u)\mathbf w
=
\bm\psi_b^\top\mathbf P_0\nabla_h^2\mathcal E_h(\mathbf u)\mathbf P_0\mathbf w
=0.
\end{equation}
The finite-difference vector $\mathbf H_i^n$ in~\eqref{eq:hess_fd_discrete} is a difference of projected gradients and therefore also satisfies $\bm\psi_b^\top\mathbf H_i^n=0$.

We first prove invariance of the directions by induction. The claim holds at $n=0$. Suppose $\bm\psi_b^\top\mathbf v_i^n=0$ for all $i$. Taking the inner product of~\eqref{eq:gin_discrete} with $\bm\psi_b$ gives
\[
\begin{aligned}
\bm\psi_b^\top\mathbf g_i^n
={}&-\bm\psi_b^\top\mathbf H_i^n
+\bigl((\mathbf v_i^n)^\top\mathbf H_i^n\bigr)
  \bm\psi_b^\top\mathbf v_i^n\\
&+2\sum_{j=1}^{i-1}
\bigl((\mathbf v_j^n)^\top\mathbf H_i^n\bigr)
\bm\psi_b^\top\mathbf v_j^n
=0.
\end{aligned}
\]
Hence the tentative update
\[
\widehat{\mathbf v}_i^{\,n+1}
=
\mathbf v_i^n+\beta\mathbf g_i^n
\]
remains orthogonal to $\bm\psi_b$. The Gram--Schmidt step and subsequent normalization are
\[
\widetilde{\mathbf v}_i^{\,n+1}
=
\widehat{\mathbf v}_i^{\,n+1}
-
\sum_{j=1}^{i-1}
\left(
(\mathbf v_j^{n+1})^\top
\widehat{\mathbf v}_i^{\,n+1}
\right)
\mathbf v_j^{n+1},
\qquad
\mathbf v_i^{n+1}
=
\frac{\widetilde{\mathbf v}_i^{\,n+1}}
{\|\widetilde{\mathbf v}_i^{\,n+1}\|_2}.
\]
For $i=1$, the sum is understood to be zero. Assuming that
$\bm\psi_b^\top\mathbf v_j^{n+1}=0$ for $j<i$, induction over $i$ gives
\[
\bm\psi_b^\top\widetilde{\mathbf v}_i^{\,n+1}=0
\qquad\text{and hence}\qquad
\bm\psi_b^\top\mathbf v_i^{n+1}=0.
\]
Thus, all updated direction vectors remain in the mass-conserving subspace, which completes the induction over $n$.

Using this orthogonality and~\eqref{eq:grad_orth}, the state direction in~\eqref{eq:gn_discrete} satisfies
\[
\begin{aligned}
\bm\psi_b^\top\mathbf g^n
={}&-\bm\psi_b^\top\nabla_{\mathbf P}\mathcal E_h(\mathbf u^n)\\
&+2\sum_{i=1}^k
\bigl((\mathbf v_i^n)^\top\nabla_{\mathbf P}\mathcal E_h(\mathbf u^n)\bigr)
\bm\psi_b^\top\mathbf v_i^n
=0.
\end{aligned}
\]
Set $m_n:=M_h(\mathbf u^n)$. Taking the discrete mass of the state update yields
\[
m_{n+1}-m_n
=
\beta h^d\bm\psi_b^\top\mathbf g^n
+\gamma(m_n-m_{n-1})
=
\gamma(m_n-m_{n-1}).
\]
Since $m_{-1}=m_0=M_0$, induction gives $m_n=M_0$ for every $n\geq0$.
\end{proof}
\vspace{1em}
\section{Theoretical Convergence Analysis of DD-ASD}

In this section, we analyze the local convergence of the discrete DD-ASD iteration. We first establish continuity of the unstable-subspace projector, then derive estimates for the linearized iteration, and finally prove local convergence of the nonlinear iteration.



The full-grid discretization contains two types of points satisfying $(\bm{\psi}_b)_j=0$: completely inactive exterior points and auxiliary exterior points involved through the finite-difference discretization. Although an auxiliary exterior degree of freedom need not itself define a zero mode, both types of exterior degrees of freedom may support nonphysical mask-induced null modes.

Let $\mathcal{Z}_{\mathrm{ext}}\subset\mathbb{R}^N$ denote the subspace spanned by all such zero modes supported entirely on the exterior degrees of freedom. We assume that $\mathcal{Z}_{\mathrm{ext}}$ is independent of $\mathbf{u}$ in the neighborhood considered below, and perform the convergence analysis on the reduced space
\[
\mathcal{X}_{\mathrm{red}}
=
\mathcal{Z}_{\mathrm{ext}}^\perp,
\]
or, equivalently, on the quotient space $\mathbb{R}^N/\mathcal{Z}_{\mathrm{ext}}$. This reduction removes only the nonphysical exterior null modes and retains the zero mode generated by the mass-conservation projection.

Let $N_{\mathrm{in}}=\dim(\mathcal{X}_{\mathrm{red}})$, and let $\mathbf{R}\in\mathbb{R}^{N\times N_{\mathrm{in}}}$ have orthonormal columns spanning $\mathcal{X}_{\mathrm{red}}$. Since the exterior null modes are supported where $\bm{\psi}_b=0$, the mass direction $\bm{\psi}_b$ is orthogonal to $\mathcal{Z}_{\mathrm{ext}}$. After an orthogonal change of coordinates in $\mathcal{X}_{\mathrm{red}}$, it may therefore be represented, up to an irrelevant nonzero scaling factor, by
\[
\mathbf{e}
=
[1,1,\ldots,1]^\top
\in\mathbb{R}^{N_{\mathrm{in}}}.
\]
Accordingly, the effective mass-conserving tangent subspace is
\begin{equation}
\label{eq:restricted_veff}
\widehat{\mathcal{V}}_{\mathrm{eff}}
=
\left\{
\mathbf{w}\in\mathbb{R}^{N_{\mathrm{in}}}
\;\middle|\;
\mathbf{w}^{\top}\mathbf{e}=0
\right\},
\end{equation}
which has dimension $N_{\mathrm{in}}-1$. For notational simplicity, we retain the symbols $\mathbf{u}$ and $\nabla_{\mathbf{P}}^2\mathcal{E}_h(\mathbf{u})$ for the reduced coordinates and the corresponding reduced operator in the remainder of this section.
\subsection{Spectral Properties and Subspace Continuity}

To facilitate the local convergence analysis near a target saddle point, we introduce the following standard assumptions regarding the spectral properties of the projected Hessian operator.
\begin{assumption}\label{assump:1}
Let $\mathbf u^*$ be an index-$k$ saddle point, and let $U(\mathbf u^*,\delta)= \{\mathbf u:\|\mathbf u-\mathbf u^*\|_2<\delta\}$. Assume that $\mathbf u^0\in U(\mathbf u^*,\delta)$ and that the projected Hessian satisfies the following conditions throughout this neighborhood:
\begin{enumerate}
    \item \textbf{Lipschitz Continuity:} There exists a constant $M > 0$ such that for any $\mathbf{u},\mathbf{w}\in U(\mathbf{u}^*,\delta)$,
    \[
    \|\nabla^2_{\mathbf{P}} \mathcal{E}_h(\mathbf{u}) - \nabla^2_{\mathbf{P}} \mathcal{E}_h(\mathbf{w})\|_2 \leq M\|\mathbf{u} - \mathbf{w}\|_2.
    \]
    \item \textbf{Null Space and Spectral Structure:}
Under the reduction of the exterior null modes described above, for any $\mathbf{u}\in U(\mathbf{u}^*,\delta)$, the reduced projected Hessian possesses a simple zero eigenvalue corresponding to the mass-conservation direction $\mathbf{e}$, and its eigenvalues satisfy
\[
\lambda_1
\leq \cdots \leq
\lambda_k
<0
=
\lambda_{k+1}
<
\lambda_{k+2}
\leq \cdots \leq
\lambda_{N_{\mathrm{in}}}.
\]
Moreover, there exist constants $0<\mu< L$ such that, uniformly for
all $\mathbf{u}\in U(\mathbf{u}^*,\delta)$,
\[
\mu
\leq
|\lambda_i(\mathbf{u})|
\leq
L,
\qquad
\lambda_i(\mathbf{u})\neq 0.
\]
\end{enumerate}
\end{assumption}

The following result follows from the Davis--Kahan theorem~\cite[Theorem~2]{yu2015useful} and gives a local Lipschitz bound for the unstable subspace projector.

\begin{corollary}[Unstable subspace continuity]
\label{cor:1}
Suppose Assumption~\ref{assump:1} holds. For any $\mathbf u,\mathbf w\in U(\mathbf u^*,\delta)$, let $\{\mathbf v_{\mathbf u,i}\}_{i=1}^k$ and $\{\mathbf v_{\mathbf w,i}\}_{i=1}^k$ be orthonormal eigenvectors associated with the $k$ negative eigenvalues of the respective projected Hessians. Define the orthogonal projectors
\[
\mathcal N(\mathbf u)
:=\sum_{i=1}^k
\mathbf v_{\mathbf u,i}\mathbf v_{\mathbf u,i}^{\top},
\qquad
\mathcal N(\mathbf w)
:=\sum_{i=1}^k
\mathbf v_{\mathbf w,i}\mathbf v_{\mathbf w,i}^{\top}.
\]
Then
\begin{equation}
\label{eq:eigenspace_perturbation}
\|\mathcal N(\mathbf u)-\mathcal N(\mathbf w)\|_2
\leq
\frac{2\sqrt{2k}M}{\mu}\|\mathbf u-\mathbf w\|_2.
\end{equation}
\end{corollary}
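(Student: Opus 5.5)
We apply the Davis--Kahan theorem in the form of \cite[Theorem~2]{yu2015useful}. For symmetric $\Sigma,\widehat\Sigma\in\mathbb R^{p\times p}$, let $V,\widehat V$ have orthonormal columns spanning eigenvectors for the eigenvalues with indices $r,\dots,s$. If the eigengap $\delta_{\mathrm{gap}}:=\min(\lambda_{r-1}-\lambda_r,\lambda_s-\lambda_{s+1})$ of $\Sigma$ is positive, then
\[
\|\sin\Theta(\widehat V,V)\|_F\le \frac{2\min\bigl(\sqrt{d}\,\|\widehat\Sigma-\Sigma\|_2,\ \|\widehat\Sigma-\Sigma\|_F\bigr)}{\delta_{\mathrm{gap}}},
\]
where $d=s-r+1$ and $\lambda_0:=+\infty$. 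Strictly speaking, this version orders eigenvalues decreasingly. We therefore either apply it to $-\nabla^2_{\mathbf P}\mathcal E_h$ or note that the statement is symmetric under reordering. We take $\Sigma=\nabla^2_{\mathbf P}\mathcal E_h(\mathbf w)$ and $\widehat\Sigma=\nabla^2_{\mathbf P}\mathcal E_h(\mathbf u)$, both in the reduced coordinates, and let $V,\widehat V$ span the $k$ negative eigenvalues, so $d=k$.

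\textbf{Step 1 (eigengap).} The target block consists of the $k$ negative eigenvalues. On one side there is no neighbouring eigenvalue, so the gap there is $+\infty$. On the other side the next eigenvalue is $\lambda_{k+1}(\mathbf w)=0$. By Assumption~\ref{assump:1}(2), $|\lambda_k(\mathbf w)|\ge\mu$, so the gap satisfies $\delta_{\mathrm{gap}}=0-\lambda_k(\mathbf w)\ge\mu$. This is the point where the simple zero eigenvalue of the mass direction $\mathbf e$ matters. It sits strictly between the unstable block and the positive spectrum, so the relevant gap is the distance from $\lambda_k$ to $0$, not to $\lambda_{k+2}$. The bound $\mu\le|\lambda_i|$ for nonzero eigenvalues then controls this gap uniformly over $U(\mathbf u^*,\delta)$.

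\textbf{Step 2 (perturbation size).} By Assumption~\ref{assump:1}(1), $\|\widehat\Sigma-\Sigma\|_2\le M\|\mathbf u-\mathbf w\|_2$. Using the $\sqrt{k}\,\|\cdot\|_2$ branch of the minimum, Davis--Kahan gives
\[
\|\sin\Theta\|_F\le \frac{2\sqrt{k}M}{\mu}\|\mathbf u-\mathbf w\|_2.
\]

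\textbf{Step 3 (from angles to projectors).} Let the principal angles be $\theta_1,\dots,\theta_k$. For equal-dimensional subspaces, the standard identity is
\[
\|\mathcal N(\mathbf u)-\mathcal N(\mathbf w)\|_F^2=\|\widehat V\widehat V^\top-VV^\top\|_F^2=2\sum_i\sin^2\theta_i=2\|\sin\Theta\|_F^2 .
\]
We verify it by expanding $\operatorname{tr}$ of the square: the trace equals $2k-2\|V^\top\widehat V\|_F^2=2\sum_i(1-\cos^2\theta_i)$. Bounding the operator norm by the Frobenius norm then gives
\[
\|\mathcal N(\mathbf u)-\mathcal N(\mathbf w)\|_2\le\sqrt2\,\|\sin\Theta\|_F\le\frac{2\sqrt{2k}M}{\mu}\|\mathbf u-\mathbf w\|_2,
\]
which is \eqref{eq:eigenspace_perturbation}.

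The projectors do not depend on which orthonormal eigenbasis is chosen. If some negative eigenvalues are repeated, this makes $\mathcal N$ well defined. The only step needing care is Step 1, namely identifying the correct gap and confirming that the zero mode $\mathbf e$ does not enter the unstable block. The rest is bookkeeping of constants.
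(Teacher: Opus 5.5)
Your proposal is correct and is essentially the paper's proof. Both use the cited Davis--Kahan bound with $r=1$, $s=k$, the gap $|\lambda_k|\ge\mu$ fixed by the simple zero (mass) eigenvalue, and the Hessian Lipschitz estimate. The only difference is that you make explicit the identity $\|\widehat V\widehat V^\top-VV^\top\|_F=\sqrt{2}\,\|\sin\Theta\|_F$ and the eigenvalue-ordering convention, both of which the paper leaves implicit when it attributes the constant $2\sqrt{2k}$ directly to the Davis--Kahan bound.
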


\begin{proof}
We apply the Davis--Kahan bound in~\cite[Theorem~2]{yu2015useful} by setting
\[
\bm\Sigma=\nabla_{\mathbf P}^2\mathcal E_h(\mathbf u),
\qquad
\widehat{\bm\Sigma}=\nabla_{\mathbf P}^2\mathcal E_h(\mathbf w),
\]
with eigenvector matrices
\[
\mathbf V=[\mathbf v_{\mathbf u,1},\ldots,\mathbf v_{\mathbf u,k}],
\qquad
\widehat{\mathbf V}
=[\mathbf v_{\mathbf w,1},\ldots,\mathbf v_{\mathbf w,k}].
\]
Thus, $\mathbf V\mathbf V^\top=\mathcal N(\mathbf u)$ and $\widehat{\mathbf V}\widehat{\mathbf V}^\top=\mathcal N(\mathbf w)$. Taking $r=1$ and $s=k$, the target cluster has dimension $p=k$.

By Assumption~\ref{assump:1}, the eigenvalues of $\bm\Sigma$ satisfy
\[
\lambda_1(\mathbf u)\leq\cdots\leq\lambda_k(\mathbf u)
\leq-\mu<0=\lambda_{k+1}(\mathbf u).
\]
With the convention $\lambda_0(\mathbf u)=-\infty$, the gap separating this cluster from the remaining spectrum is
\[
\begin{aligned}
\mathrm{Gap}
&:=\min\{\lambda_1(\mathbf u)-\lambda_0(\mathbf u),
          \lambda_{k+1}(\mathbf u)-\lambda_k(\mathbf u)\}\\
&=|\lambda_k(\mathbf u)|\geq\mu.
\end{aligned}
\]
Consequently,
\[
\begin{aligned}
\|\mathcal N(\mathbf u)-\mathcal N(\mathbf w)\|_2
&\leq
\|\mathcal N(\mathbf u)-\mathcal N(\mathbf w)\|_F\\
&\leq
\frac{2\sqrt{2k}}{\mathrm{Gap}}
\|\bm\Sigma-\widehat{\bm\Sigma}\|_2\\
&\leq
\frac{2\sqrt{2k}}{\mu}
\|\bm\Sigma-\widehat{\bm\Sigma}\|_2,
\end{aligned}
\]
where the first inequality uses the comparison between the spectral and Frobenius norms, the second follows from the Davis--Kahan bound, and the third uses $\mathrm{Gap}\geq\mu$. Finally, the Hessian Lipschitz estimate in Assumption~\ref{assump:1} gives
\[
\|\bm\Sigma-\widehat{\bm\Sigma}\|_2
\leq M\|\mathbf u-\mathbf w\|_2,
\]
which proves~\eqref{eq:eigenspace_perturbation}.
\end{proof}

\vspace{1em}
\subsection{Convergence Estimates for the Linearized System}

To analyze the accelerated iteration, we restrict the linearized dynamics to $\widehat{\mathcal V}_{\mathrm{eff}}$, thereby excluding the zero mode associated with mass conservation. The resulting estimate is a specialization of the heavy-ball bound in~\cite[Theorem~5]{wang2021modular}.

\begin{corollary}[Uniform heavy-ball bound]\label{cor:2}
Let $\mathbf G$ be symmetric positive definite on $\widehat{\mathcal V}_{\mathrm{eff}}$ and suppose that
\[
0<\mu\leq\lambda_{\min}(\mathbf G)\leq\lambda_{\max}(\mathbf G)\leq L,\qquad \kappa:=\frac{L}{\mu}.
\]
Define
\[
\mathbf A:=\begin{bmatrix}(1+\gamma)\mathbf I-\beta\mathbf G & -\gamma\mathbf I\\ \mathbf I & \mathbf 0\end{bmatrix},
\qquad
h(\gamma,z):=-\bigl(\gamma-(1-\sqrt z)^2\bigr)\bigl(\gamma-(1+\sqrt z)^2\bigr).
\]
If $\beta>0$, $0<\gamma\leq1$, and
\[
\gamma>\max\left\{(1-\sqrt{\beta\mu})^2,(1-\sqrt{\beta L})^2\right\},
\]
then, for every integer $m\geq0$,
\[
\|\mathbf A^m\|_2\leq C_0(\sqrt{\gamma})^m,
\qquad
C_0:=\frac{2(1+\gamma)}{\sqrt{\min\{h(\gamma,\beta\mu),h(\gamma,\beta L)\}}}.
\]
In particular, consider the parameter choice
\begin{equation}\label{eq:params}
\sqrt{\beta}=\frac{2}{\sqrt L+\sqrt\mu},\qquad \sqrt{\gamma}=1-\frac{2}{\sqrt\kappa+1}+\eta,\qquad 0<\eta<\frac{1}{\sqrt\kappa+1}.
\end{equation}
Then the above admissibility condition holds and
\[
\|\mathbf A^m\|_2\leq K(\sqrt{\gamma})^m,\qquad K:=\frac{2\sqrt{2}\,(\sqrt\kappa+1)^{1/2}}{\sqrt{3}\,\eta}.
\]
\end{corollary}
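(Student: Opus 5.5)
Since $\mathbf G$ is symmetric on $\widehat{\mathcal V}_{\mathrm{eff}}$, we first diagonalize it as $\mathbf G=\mathbf Q\bm\Lambda\mathbf Q^\top$ with $\mathbf Q$ orthogonal (on the effective subspace) and $\bm\Lambda=\operatorname{diag}(\lambda_i)$, $\lambda_i\in[\mu,L]$. Conjugating $\mathbf A$ by the orthogonal matrix $\operatorname{diag}(\mathbf Q,\mathbf Q)$ followed by a permutation, $\mathbf A$ becomes block diagonal with $2\times2$ blocks
\[
\mathbf A_i=\begin{bmatrix}1+\gamma-\beta\lambda_i & -\gamma\\ 1 & 0\end{bmatrix}.
\]
Hence $\|\mathbf A^m\|_2=\max_i\|\mathbf A_i^m\|_2$. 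We therefore only need a bound on $\|\mathbf A_i^m\|_2$ that is uniform in $z=\beta\lambda_i\in[\beta\mu,\beta L]$. This reduction is exactly the mechanism behind \cite[Theorem~5]{wang2021modular}, which we specialize here.

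Next we analyze a single block. Its characteristic polynomial is $x^2-(1+\gamma-z)x+\gamma$, with discriminant $(1+\gamma-z)^2-4\gamma=-h(\gamma,z)$, since $h(\gamma,z)=-(\gamma-(1-\sqrt z)^2)(\gamma-(1+\sqrt z)^2)$. The admissibility condition $(1-\sqrt z)^2<\gamma$, together with $\gamma\le1\le(1+\sqrt z)^2$ (the equality case $\gamma=1$, $z=0$ is excluded because $z>0$), gives $h(\gamma,z)>0$. The eigenvalues are then complex conjugates $x_\pm$ of modulus $\sqrt\gamma$, with $|x_+-x_-|=\sqrt{h(\gamma,z)}$. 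Because the eigenvalues are distinct, we can write $\mathbf A_i^m$ explicitly. With $a=1+\gamma-z$, the entries are of the form $(x_+^{m+1}-x_-^{m+1})/(x_+-x_-)$ and $-\gamma(x_+^{m}-x_-^{m})/(x_+-x_-)$. The identity $\mathbf A_i^m=(x_+^{m}(\mathbf A_i-x_-\mathbf I)-x_-^{m}(\mathbf A_i-x_+\mathbf I))/(x_+-x_-)$ gives
\[
\|\mathbf A_i^m\|_2\le \frac{(\sqrt\gamma)^m\bigl(\|\mathbf A_i-x_-\mathbf I\|_2+\|\mathbf A_i-x_+\mathbf I\|_2\bigr)}{\sqrt{h(\gamma,z)}}.
\]
Each of these norms is at most about $1+\gamma$ (using $|a|\le1+\gamma$, $|x_\pm|\le1$, and $\gamma\le1$). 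Getting the constant $2(1+\gamma)$ may require bounding the norm by the Frobenius norm and doing some bookkeeping; this is the routine calculation we will carry out. This gives $\|\mathbf A_i^m\|_2\le 2(1+\gamma)(\sqrt\gamma)^m/\sqrt{h(\gamma,z)}$.

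To make the bound uniform, we note that $z\mapsto h(\gamma,z)$ is a concave quadratic in $\sqrt z$-type variables. Concretely, expanding gives $h=-(\gamma-1-z)^2+4z$, hence $h=4\gamma-(1+\gamma-z)^2$, which is concave in $z$. Its minimum over the interval $[\beta\mu,\beta L]$ is therefore attained at an endpoint. This yields the constant $C_0$.

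For the specific parameters \eqref{eq:params}, we set $s=\sqrt\kappa$. Then $\sqrt{\beta\mu}=2/(s+1)$ and $\sqrt{\beta L}=2s/(s+1)$. This gives $1-\sqrt{\beta\mu}=(s-1)/(s+1)$ and $1-\sqrt{\beta L}=-(s-1)/(s+1)$, and both of these equal $1-2/(s+1)$ in absolute value. So admissibility holds precisely because $\eta>0$. The condition $\eta<1/(s+1)$ keeps $\sqrt\gamma<1$. To evaluate $h$ at the endpoints, we factor $h=(\sqrt\gamma-|1-\sqrt z|)(\sqrt\gamma+|1-\sqrt z|)\,((1+\sqrt z)^2-\gamma)$. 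The first factor equals $\eta$. The second is at least $\eta$, and more usefully about $2(s-1)/(s+1)$. The third factor is bounded below by $(1+\sqrt z)^2-1\ge 2\sqrt z\ge 4/(s+1)$.

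The main obstacle is matching the stated constant $K=2\sqrt2(\sqrt\kappa+1)^{1/2}/(\sqrt3\eta)$ exactly, using $1+\gamma\le2$ and appropriate lower bounds on $h$. We would first try the lower bound $h\ge \eta\cdot\eta\cdot c/(s+1)$, or a comparable one that keeps $\eta^2$ and the factor $(s+1)$. We would then adjust which factor absorbs which estimate, and use $\eta<1/(s+1)$, until the numerical constant $\sqrt{8/3}$ emerges.
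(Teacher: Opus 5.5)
Your first half is sound and uses the same mechanism as the paper. The paper simply invokes the cited heavy-ball estimate together with concavity of $h(\gamma,\cdot)$ in $z$; your reproof via $2\times2$ blocks and the interpolation formula for $\mathbf A_i^m$ is fine. The bookkeeping you defer closes exactly. Since $x_++x_-=1+\gamma-z$, one has $\mathbf A_i-x_\mp\mathbf I=\begin{bmatrix}x_\pm&-\gamma\\ 1&-x_\mp\end{bmatrix}$, whose Frobenius norm is $\sqrt{\gamma+\gamma^2+1+\gamma}=1+\gamma$. The bounds you list ($|a|\le1+\gamma$, $|x_\pm|\le1$) would not give this on their own. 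Also, admissibility is assumed only at the endpoints, so $h(\gamma,z)>0$ at interior $z$ must come from concavity of $h(\gamma,\cdot)$. You should establish that before the block analysis, not after.

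The real gap is in the second half: the estimate you sketch does not reach $K$. Your first factor equals $\eta$, your second is bounded below by $\eta$, and you bound the last by $(1+\sqrt z)^2-\gamma\ge 2\sqrt z\ge 4/(s+1)$. This gives only $\min h\ge 4\eta^2/(s+1)$, hence $C_0\le 2\sqrt{s+1}/\eta$, which exceeds $K=\sqrt{8/3}\,\sqrt{s+1}/\eta$ by a factor $\sqrt{3/2}$. Since you use $2(1+\gamma)\le 4$, you need $\min h\ge 6\eta^2/(s+1)$. So part of what you discarded, chiefly the $1-\gamma$ contribution to the last factor, must be kept. The paper does this by factoring the last factor as a difference of squares at each endpoint:
\[
(1+\sqrt{\beta\mu})^2-\gamma=\Bigl(\frac{4}{s+1}-\eta\Bigr)(2+\eta)\ge\frac{3(2+\eta)}{s+1},
\qquad
(1+\sqrt{\beta L})^2-\gamma=(2-\eta)(2+2q+\eta)\ge\frac{3(2+\eta)}{s+1}.
\]
These use $\eta<1/(s+1)$, $q=(s-1)/(s+1)\ge0$, and $2-\eta\ge 3/2\ge 3/(s+1)$. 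Combined with $\gamma-q^2=\eta(2q+\eta)\ge\eta^2$ (the product of your first two factors), this gives $\min h\ge 3\eta^2(2+\eta)/(s+1)\ge 6\eta^2/(s+1)$. Therefore $C_0\le 4\sqrt{s+1}/(\sqrt6\,\eta)=K$.
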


\begin{proof}
The function $\lambda\mapsto(1-\sqrt{\beta\lambda})^2$ is convex for $\lambda>0$, so the endpoint condition controls every eigenvalue of $\mathbf G$ in $[\mu,L]$. Applying the heavy-ball estimate in~\cite{wang2021modular,luo2025accelerated} and using the concavity of $h(\gamma,z)=4\gamma-(1+\gamma-z)^2$ in $z$ gives the stated bound with constant $C_0$.

For the parameter choice in~\eqref{eq:params}, set $s=\sqrt\kappa$ and $q=(s-1)/(s+1)$. Then
\[
\sqrt{\beta\mu}=\frac{2}{s+1},
\qquad
\sqrt{\beta L}=\frac{2s}{s+1},
\qquad
\sqrt\gamma=q+\eta.
\]
Since $0<\eta<1/(s+1)$, we have
\[
|1-\sqrt{\beta\mu}|
=|1-\sqrt{\beta L}|
=q<\sqrt\gamma<1,
\]
which verifies the admissibility condition.

At both endpoints, the first factor in
\[
h(\gamma,z)
=\bigl[\gamma-(1-\sqrt z)^2\bigr]
 \bigl[(1+\sqrt z)^2-\gamma\bigr]
\]
equals $\gamma-q^2=\eta(2q+\eta)\geq\eta^2$. The second factors satisfy
\[
\begin{aligned}
(1+\sqrt{\beta\mu})^2-\gamma
&=\left(\frac{4}{s+1}-\eta\right)(2+\eta)
\geq\frac{3(2+\eta)}{s+1},\\
(1+\sqrt{\beta L})^2-\gamma
&=(2-\eta)(2+2q+\eta)
\geq\frac{3(2+\eta)}{s+1},
\end{aligned}
\]
where we used $q\geq0$, $\eta<1/(s+1)$, and $2-\eta\geq3/(s+1)$ for $s\geq1$. Consequently,
\[
\min\{h(\gamma,\beta\mu),h(\gamma,\beta L)\}
\geq\frac{3\eta^2(2+\eta)}{s+1}.
\]
Substituting this estimate into $C_0$ and using $\gamma<1$ gives
\[
C_0
\leq\frac{2(1+\gamma)\sqrt{s+1}}
{\sqrt{3}\,\eta\sqrt{2+\eta}}
\leq\frac{2\sqrt{2}\sqrt{s+1}}{\sqrt{3}\,\eta}
=K,
\]
which completes the proof.
\end{proof}

\subsection{Local Convergence of the Nonlinear Iterations}

We now establish the local convergence rate of the state iteration under the exact-eigenspace assumption. To simplify the mathematical analysis and highlight the core ideas of the proof, we introduce the following standard assumption regarding the eigensolver accuracy.

\begin{assumption}[Exact Eigenvectors] \label{assump:2}
In each iteration of the algorithm, the vectors $\{\mathbf v_i^n\}_{i=1}^k$ are exact orthonormal eigenvectors of $\nabla_{\mathbf P}^2\mathcal E_h(\mathbf u^n)$ associated with its $k$ smallest eigenvalues.
\end{assumption}

The Assumption \ref{assump:2} helps to simplify the numerical analysis and highlight the techniques in the derivation of local convergence rate. In practice, the unstable eigenvectors are approximated by the direction updates in \eqref{eq:discrete_coupled_system}.
The perturbation analysis for discrete HiSD with approximated eigenvectors will lead to more technical calculations in analyzing the convergence rates, see e.g.~\cite{luo2022convergence}. This extension will be investigated in the future. 

We reformulate the discrete dynamical system into a matrix recurrence form with a remainder term.

\begin{theorem}\label{thm:3}
Suppose Assumptions~\ref{assump:1} and~\ref{assump:2} hold. Let $\mathbf u^*$ be the target constrained index-$k$ saddle point, and assume $\mathbf u^{-1}=\mathbf u^0$ and $\mathbf u^0-\mathbf u^*\in\widehat{\mathcal V}_{\mathrm{eff}}$. Define
\[
\mathbf e_n:=\mathbf u^n-\mathbf u^*,
\qquad
\mathcal R_n:=\mathbf I-2\mathcal N(\mathbf u^n),
\qquad
\mathcal R^*:=\mathbf I-2\mathcal N(\mathbf u^*),
\]
where $\mathcal N(\mathbf u)$ is the unstable eigenspace projector defined in Corollary~\ref{cor:1}. Whenever $\mathbf u^n\in U(\mathbf u^*,\delta)$, the state iteration satisfies
\begin{equation}\label{eq:dynamics_matrix}
\begin{bmatrix}
\mathbf e_{n+1}\\
\mathbf e_n
\end{bmatrix}
=
\begin{bmatrix}
(1+\gamma)\mathbf I-\beta\mathbf A_* & -\gamma\mathbf I\\
\mathbf I & \mathbf 0
\end{bmatrix}
\begin{bmatrix}
\mathbf e_n\\
\mathbf e_{n-1}
\end{bmatrix}
+
\begin{bmatrix}
\mathbf r^n\\
\mathbf 0
\end{bmatrix},
\end{equation}
where
\[
\begin{aligned}
\mathbf A_*&:=\mathcal R^*\nabla_{\mathbf P}^2\mathcal E_h(\mathbf u^*),\\
\widetilde{\mathbf A}_n&:=\mathcal R_n\int_0^1
\nabla_{\mathbf P}^2\mathcal E_h(\mathbf u^*+t\mathbf e_n)\,dt,\\
\mathbf r^n&:=\beta(\mathbf A_*-\widetilde{\mathbf A}_n)\mathbf e_n.
\end{aligned}
\]
Moreover,
\begin{equation}\label{eq:pn_bound}
\|\mathbf r^n\|_2\leq\beta C_1\|\mathbf e_n\|_2^2,
\qquad
C_1:=\left(\frac{4\sqrt{2k}L}{\mu}+\frac12\right)M.
\end{equation}
\end{theorem}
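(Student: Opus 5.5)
The plan is to write the state update in error form and then linearize the projected gradient around $\mathbf u^*$ by the integral form of the mean value theorem. Under Assumption~\ref{assump:2}, the compact form of $\mathbf g^n$ is $\mathbf g^n=-\mathcal R_n\nabla_{\mathbf P}\mathcal E_h(\mathbf u^n)$. Since $\mathbf u^*$ is a constrained critical point, $\nabla_{\mathbf P}\mathcal E_h(\mathbf u^*)=\mathbf 0$. The fundamental theorem of calculus along the segment $\mathbf u^*+t\mathbf e_n$ then gives
\[
\nabla_{\mathbf P}\mathcal E_h(\mathbf u^n)=\int_0^1\nabla_{\mathbf P}^2\mathcal E_h(\mathbf u^*+t\mathbf e_n)\,dt\,\mathbf e_n .
\]
This uses that $\mathbf P_0$ is constant, so the derivative of $\mathbf P_0\nabla_h\mathcal E_h$ in a direction $\mathbf w\in\widehat{\mathcal V}_{\mathrm{eff}}$ is $\mathbf P_0\nabla_h^2\mathcal E_h\mathbf P_0\mathbf w$. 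By Theorem~\ref{thm:mass_conservation}, $\mathbf e_n\in\widehat{\mathcal V}_{\mathrm{eff}}$, and therefore $\mathbf P_0\mathbf e_n=\mathbf e_n$. The segment lies in $U(\mathbf u^*,\delta)$ by convexity of the ball, so the assumptions apply along it. Hence $\beta\mathbf g^n=-\beta\widetilde{\mathbf A}_n\mathbf e_n$.

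Subtracting $\mathbf u^*$ from the momentum update gives
\[
\mathbf e_{n+1}=(1+\gamma)\mathbf e_n-\gamma\mathbf e_{n-1}-\beta\widetilde{\mathbf A}_n\mathbf e_n .
\]
Adding and subtracting $\beta\mathbf A_*\mathbf e_n$ yields the first block row of \eqref{eq:dynamics_matrix} with $\mathbf r^n=\beta(\mathbf A_*-\widetilde{\mathbf A}_n)\mathbf e_n$. The second block row is the identity $\mathbf e_n=\mathbf e_n$.

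For the bound \eqref{eq:pn_bound}, I split the difference as
\[
\mathbf A_*-\widetilde{\mathbf A}_n=(\mathcal R^*-\mathcal R_n)\nabla_{\mathbf P}^2\mathcal E_h(\mathbf u^*)+\mathcal R_n\int_0^1\bigl(\nabla_{\mathbf P}^2\mathcal E_h(\mathbf u^*)-\nabla_{\mathbf P}^2\mathcal E_h(\mathbf u^*+t\mathbf e_n)\bigr)dt .
\]

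\textbf{First term.} Here $\mathcal R^*-\mathcal R_n=2(\mathcal N(\mathbf u^n)-\mathcal N(\mathbf u^*))$. Corollary~\ref{cor:1} bounds its norm by $\frac{4\sqrt{2k}M}{\mu}\|\mathbf e_n\|_2$. Assumption~\ref{assump:1} gives $\|\nabla_{\mathbf P}^2\mathcal E_h(\mathbf u^*)\|_2\le L$, since all eigenvalues have modulus at most $L$ and the matrix is symmetric. Together these give the contribution $\frac{4\sqrt{2k}LM}{\mu}\|\mathbf e_n\|_2$.

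\textbf{Second term.} $\mathcal R_n$ is an orthogonal reflection, so $\|\mathcal R_n\|_2=1$. Hessian Lipschitz continuity gives $\int_0^1 Mt\|\mathbf e_n\|_2\,dt=\frac M2\|\mathbf e_n\|_2$.

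Multiplying by $\beta\|\mathbf e_n\|_2$ produces exactly $C_1=\bigl(\frac{4\sqrt{2k}L}{\mu}+\frac12\bigr)M$.

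The main obstacle I expect is justifying the linearization on the reduced space, that is, passing from $\mathbf P_0\nabla_h^2\mathcal E_h$ to $\nabla_{\mathbf P}^2\mathcal E_h=\mathbf P_0\nabla_h^2\mathcal E_h\mathbf P_0$. This requires mass conservation of the iterates from Theorem~\ref{thm:mass_conservation}, started from $\mathbf u^0-\mathbf u^*\in\widehat{\mathcal V}_{\mathrm{eff}}$, together with the reduced-coordinate conventions. A second point to check is that Corollary~\ref{cor:1} applies with $\mathbf u=\mathbf u^n$ and $\mathbf w=\mathbf u^*$, which both lie in $U(\mathbf u^*,\delta)$. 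The remaining steps are routine norm estimates.
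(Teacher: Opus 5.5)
Your proposal is correct and follows the paper's proof essentially step by step. You use the compact form $\mathbf g^n=-\mathcal R_n\nabla_{\mathbf P}\mathcal E_h(\mathbf u^n)$ under Assumption~\ref{assump:2}, and the integral mean-value linearization justified by mass conservation and $\mathbf P_0\mathbf e_n=\mathbf e_n$. You then use the same two-term splitting of $\mathbf A_*-\widetilde{\mathbf A}_n$, bounded via Corollary~\ref{cor:1}, $\|\nabla_{\mathbf P}^2\mathcal E_h(\mathbf u^*)\|_2\le L$, $\|\mathcal R_n\|_2=1$, and Hessian Lipschitz continuity. Your remarks on convexity of the ball and on why the spectral norm is bounded by $L$ just spell out what the paper uses implicitly.
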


\begin{proof}
Mass conservation and the initialization imply $\mathbf e_n\in\widehat{\mathcal V}_{\mathrm{eff}}$. On this subspace, the derivative of the projected gradient is the projected Hessian. If $\|\mathbf e_n\|_2<\delta$, then $\mathbf u^\ast+t\mathbf e_n\in U(\mathbf u^\ast,\delta)$ for every $t\in[0,1]$. Therefore, using $\nabla_{\mathbf P}\mathcal E_h(\mathbf u^\ast)=\mathbf 0$, the fundamental theorem of calculus gives
\[
\nabla_{\mathbf P}\mathcal E_h(\mathbf u^n)
=
\left[\int_0^1
\nabla_{\mathbf P}^2\mathcal E_h(\mathbf u^*+t\mathbf e_n)\,dt
\right]\mathbf e_n.
\]
Under Assumption~\ref{assump:2}, the state update therefore becomes
\[
\begin{aligned}
\mathbf e_{n+1}
&=(1+\gamma)\mathbf e_n-\gamma\mathbf e_{n-1}
-\beta\mathcal R_n\nabla_{\mathbf P}\mathcal E_h(\mathbf u^n)\\
&=\bigl((1+\gamma)\mathbf I-\beta\mathbf A_*\bigr)\mathbf e_n
-\gamma\mathbf e_{n-1}+\mathbf r^n,
\end{aligned}
\]
which yields~\eqref{eq:dynamics_matrix}.

To estimate the remainder, write
\[
\begin{aligned}
\mathbf A_*-\widetilde{\mathbf A}_n
={}&(\mathcal R^*-\mathcal R_n)
\nabla_{\mathbf P}^2\mathcal E_h(\mathbf u^*)\\
&+\mathcal R_n\int_0^1
\left[
\nabla_{\mathbf P}^2\mathcal E_h(\mathbf u^*)
-\nabla_{\mathbf P}^2\mathcal E_h(\mathbf u^*+t\mathbf e_n)
\right]dt.
\end{aligned}
\]
The reflection satisfies $\|\mathcal R_n\|_2=1$, while Corollary~\ref{cor:1} gives
\[
\|\mathcal R^*-\mathcal R_n\|_2
=2\|\mathcal N(\mathbf u^*)-\mathcal N(\mathbf u^n)\|_2
\leq\frac{4\sqrt{2k}M}{\mu}\|\mathbf e_n\|_2.
\]
Using the spectral bound and Hessian Lipschitz continuity in Assumption~\ref{assump:1}, we obtain
\[
\begin{aligned}
\|\mathbf A_*-\widetilde{\mathbf A}_n\|_2
&\leq L\|\mathcal R^*-\mathcal R_n\|_2
+\int_0^1 Mt\|\mathbf e_n\|_2\,dt\\
&\leq\left(\frac{4\sqrt{2k}LM}{\mu}+\frac M2\right)
\|\mathbf e_n\|_2
=C_1\|\mathbf e_n\|_2.
\end{aligned}
\]
Consequently,
\[
\|\mathbf r^n\|_2
\leq\beta\|\mathbf A_*-\widetilde{\mathbf A}_n\|_2
\|\mathbf e_n\|_2
\leq\beta C_1\|\mathbf e_n\|_2^2,
\]
proving~\eqref{eq:pn_bound}.
\end{proof}

\vspace{1em}
For the convergence analysis, define
\begin{equation}\label{eq:21}
\mathbf Z_{n+1}
:=
\begin{bmatrix}
\mathbf u^{n+1}-\mathbf u^\ast\\
\mathbf u^{n}-\mathbf u^\ast
\end{bmatrix},
\qquad
\mathbf T
:=
\begin{bmatrix}
(1+\gamma)\mathbf I-\beta\mathbf A_\ast & -\gamma\mathbf I\\
\mathbf I & \mathbf 0
\end{bmatrix},
\qquad
\mathbf R_n
:=
\begin{bmatrix}
\mathbf r^n\\
\mathbf 0
\end{bmatrix}.
\end{equation}
Then \eqref{eq:dynamics_matrix} can be written as
\begin{equation}\label{eq:block_recurrence}
\mathbf Z_{n+1}=\mathbf T\mathbf Z_n+\mathbf R_n.
\end{equation}

\begin{lemma}[Effective Subspace Isometry] \label{lemma:isometry}
Let $\mathbf{V} \in \mathbb{R}^{N_{in} \times (N_{in}-1)}$ be an orthonormal basis matrix for the effective subspace $\hat{\mathcal{V}}_{\text{eff}}$, satisfying $\mathbf{V}^\top \mathbf{V} = \mathbf{I}_{N_{in}-1}$. Define the restricted limiting matrix as $\mathbf{\tilde{A}}_* := \mathbf{V}^\top \mathbf{A}_* \mathbf{V} \in \mathbb{R}^{(N_{in}-1) \times (N_{in}-1)}$, and the corresponding restricted propagation matrix as:
\[
\mathbf{\tilde{T}} := \begin{bmatrix} (1 + \gamma)\mathbf{I}_{N_{in}-1} - \beta \mathbf{\tilde{A}}_* & -\gamma \mathbf{I}_{N_{in}-1} \\ \mathbf{I}_{N_{in}-1} & \mathbf{0} \end{bmatrix}.
\]
Define the augmented orthogonal basis matrix $\mathbf{\mathbb{V}} := \begin{bmatrix} \mathbf{V} & \mathbf{0} \\ \mathbf{0} & \mathbf{V} \end{bmatrix}$. If an initial state vector $\mathbf{Z}_0$ lies entirely within the effective subspace $\hat{\mathcal{V}}_{\text{eff}} \times \hat{\mathcal{V}}_{\text{eff}}$, meaning that there exists a unique coordinate vector $\tilde{\mathbf{Z}}_0 \in \mathbb{R}^{2(N_{in}-1)}$ such that $\mathbf{Z}_0 = \mathbb{V} \tilde{\mathbf{Z}}_0$, then the full-space dynamics and the restricted dynamics satisfy the intertwining relation $\mathbf{T}^m \mathbf{\mathbb{V}} = \mathbf{\mathbb{V}} \mathbf{\tilde{T}}^m$ for any $m \ge 0$. Consequently, the following isometric identity holds strictly:
\begin{equation} \label{eq:norm_equiv}
\|\mathbf{T}^{n+1} \mathbf{Z}_0\|_2 = \|\mathbf{\mathbb{V}} \mathbf{\tilde{T}}^{n+1} \mathbf{\tilde{Z}}_0\|_2 = \|\mathbf{\tilde{T}}^{n+1} \mathbf{\tilde{Z}}_0\|_2.
\end{equation}
\end{lemma}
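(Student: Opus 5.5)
The whole lemma reduces to one structural fact: $\mathbf A_*$ leaves $\widehat{\mathcal V}_{\mathrm{eff}}$ invariant, so that
\[
\mathbf A_*\mathbf V=\mathbf V\tilde{\mathbf A}_*.
\]
Given this, the intertwining relation follows from a block computation and an induction on $m$. The isometry then follows because $\mathbb V$ has orthonormal columns.

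\textbf{Step 1: invariance of $\widehat{\mathcal V}_{\mathrm{eff}}$ under $\mathbf A_*$.} Write $\mathbf H^*:=\nabla_{\mathbf P}^2\mathcal E_h(\mathbf u^*)$ in the reduced coordinates. It is symmetric, and by Assumption~\ref{assump:1} (and the computation preceding Theorem~\ref{thm:mass_conservation}) it has $\mathbf e$ as a null vector. Hence for any $\mathbf w$ we have $\mathbf e^\top\mathbf H^*\mathbf w=(\mathbf H^*\mathbf e)^\top\mathbf w=0$, so $\operatorname{range}(\mathbf H^*)\subset\widehat{\mathcal V}_{\mathrm{eff}}$.

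Next consider the unstable eigenvectors $\mathbf v_{\mathbf u^*,i}$, $i\le k$. They belong to eigenvalues $\lambda_i<0$, which differ from the eigenvalue $0$ of $\mathbf e$. Since $\mathbf H^*$ is symmetric, each $\mathbf v_{\mathbf u^*,i}$ is orthogonal to $\mathbf e$. Therefore $\mathcal N(\mathbf u^*)$, and hence $\mathcal R^*=\mathbf I-2\mathcal N(\mathbf u^*)$, maps $\widehat{\mathcal V}_{\mathrm{eff}}$ into itself.

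It follows that $\mathbf A_*=\mathcal R^*\mathbf H^*$ maps all of $\mathbb R^{N_{\mathrm{in}}}$, in particular $\widehat{\mathcal V}_{\mathrm{eff}}$, into $\widehat{\mathcal V}_{\mathrm{eff}}$. The matrix $\mathbf V\mathbf V^\top$ is the orthogonal projector onto $\widehat{\mathcal V}_{\mathrm{eff}}$, so $\mathbf V\mathbf V^\top\mathbf A_*\mathbf V=\mathbf A_*\mathbf V$. That is, $\mathbf A_*\mathbf V=\mathbf V\tilde{\mathbf A}_*$.

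\textbf{Step 2: intertwining.} Multiply out blockwise, using $\mathbf I\mathbf V=\mathbf V\mathbf I_{N_{\mathrm{in}}-1}$ and Step~1:
\[
\mathbf T\mathbb V=\begin{bmatrix}(1+\gamma)\mathbf V-\beta\mathbf A_*\mathbf V & -\gamma\mathbf V\\ \mathbf V & \mathbf 0\end{bmatrix}
=\begin{bmatrix}\mathbf V\bigl((1+\gamma)\mathbf I-\beta\tilde{\mathbf A}_*\bigr) & -\gamma\mathbf V\\ \mathbf V & \mathbf 0\end{bmatrix}
=\mathbb V\tilde{\mathbf T}.
\]
Then induction on $m$ gives
\[
\mathbf T^{m+1}\mathbb V=\mathbf T(\mathbb V\tilde{\mathbf T}^m)=\mathbb V\tilde{\mathbf T}^{m+1},
\]
with the case $m=0$ trivial.

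\textbf{Step 3: isometry.} Since $\mathbf V^\top\mathbf V=\mathbf I$, we also have $\mathbb V^\top\mathbb V=\mathbf I_{2(N_{\mathrm{in}}-1)}$, so $\|\mathbb V\mathbf y\|_2=\|\mathbf y\|_2$ for every $\mathbf y$. The coordinate vector $\tilde{\mathbf Z}_0$ exists and is unique because the columns of $\mathbb V$ form an orthonormal basis of $\widehat{\mathcal V}_{\mathrm{eff}}\times\widehat{\mathcal V}_{\mathrm{eff}}$; explicitly, $\tilde{\mathbf Z}_0=\mathbb V^\top\mathbf Z_0$. Applying Step~2 with $m=n+1$ to $\mathbf Z_0=\mathbb V\tilde{\mathbf Z}_0$ then yields \eqref{eq:norm_equiv}.

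The only nontrivial point is Step~1, namely that the reflection $\mathcal R^*$ does not destroy invariance of $\widehat{\mathcal V}_{\mathrm{eff}}$. This rests on the simplicity and isolation of the zero eigenvalue in Assumption~\ref{assump:1}, which keeps the unstable eigenvectors orthogonal to $\mathbf e$.
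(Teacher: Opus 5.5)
Your proof is correct and follows the same route as the paper: the invariance identity $\mathbf A_*\mathbf V=\mathbf V\tilde{\mathbf A}_*$, the blockwise computation $\mathbf T\mathbb V=\mathbb V\tilde{\mathbf T}$, induction on $m$, and the isometry $\mathbb V^\top\mathbb V=\mathbf I$. The one difference is that your Step~1 proves the invariance of $\widehat{\mathcal V}_{\mathrm{eff}}$ under $\mathbf A_*$, which the paper only asserts. That step needs only $\lambda_i\neq 0$ for $i\le k$, not the simplicity of the zero eigenvalue, because $\mathbf v_{\mathbf u^*,i}=\lambda_i^{-1}\mathbf H^*\mathbf v_{\mathbf u^*,i}\in\operatorname{range}(\mathbf H^*)\subset\widehat{\mathcal V}_{\mathrm{eff}}$.
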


\begin{proof}
Because $\hat{\mathcal{V}}_{\text{eff}}$ is an invariant subspace of $\mathbf{A}_*$, we have $\mathbf{A}_* \mathbf{V} = \mathbf{V} \mathbf{\tilde{A}}_*$. Therefore, we can establish the intertwining relation between $\mathbf{T}$ and $\mathbf{\tilde{T}}$ as follows:
\begin{align*}
\mathbf{T} \mathbf{\mathbb{V}} &= \begin{bmatrix} (1 + \gamma)\mathbf{I} - \beta \mathbf{A}_* & -\gamma \mathbf{I} \\ \mathbf{I} & \mathbf{0} \end{bmatrix} \begin{bmatrix} \mathbf{V} & \mathbf{0} \\ \mathbf{0} & \mathbf{V} \end{bmatrix} \\
&= \begin{bmatrix} ((1 + \gamma)\mathbf{I} - \beta \mathbf{A}_*)\mathbf{V} & -\gamma \mathbf{V} \\ \mathbf{V} & \mathbf{0} \end{bmatrix} \\
&= \begin{bmatrix} \mathbf{V}((1 + \gamma)\mathbf{I}_{N_{in}-1} - \beta \mathbf{\tilde{A}}_*) & -\gamma \mathbf{V} \\ \mathbf{V} & \mathbf{0} \end{bmatrix} \\
&= \begin{bmatrix} \mathbf{V} & \mathbf{0} \\ \mathbf{0} & \mathbf{V} \end{bmatrix} \begin{bmatrix} (1 + \gamma)\mathbf{I}_{N_{in}-1} - \beta \mathbf{\tilde{A}}_* & -\gamma \mathbf{I}_{N_{in}-1} \\ \mathbf{I}_{N_{in}-1} & \mathbf{0} \end{bmatrix} \\
&= \mathbf{\mathbb{V}} \mathbf{\tilde{T}}.
\end{align*}
Through simple mathematical induction, this relation extends to any power: $\mathbf{T}^{n+1} \mathbf{\mathbb{V}} = \mathbf{\mathbb{V}} \mathbf{\tilde{T}}^{n+1}$. Consequently, the full-space error evolution can be expressed as:
$$\mathbf{T}^{n+1} \mathbf{Z}_0 = \mathbf{T}^{n+1} \mathbf{\mathbb{V}} \mathbf{\tilde{Z}}_0 = \mathbf{\mathbb{V}} \mathbf{\tilde{T}}^{n+1} \mathbf{\tilde{Z}}_0.$$
Since $\mathbf{V}^\top \mathbf{V} = \mathbf{I}_{N_{in}-1}$, the augmented matrix $\mathbf{\mathbb{V}}$ is also an isometry satisfying $\mathbf{\mathbb{V}}^\top \mathbf{\mathbb{V}} = \mathbf{I}_{2(N_{in}-1)}$. Because an isometric map strictly preserves the 2-norm of a vector (i.e., $\|\mathbf{\mathbb{V}}\mathbf{y}\|_2 = \|\mathbf{y}\|_2$), we obtain the strict norm equivalence \eqref{eq:norm_equiv}.
\end{proof}

\begin{theorem}[Local convergence rate of DD-ASD under the exact-eigenspace assumption]\label{thm:4}
Suppose Assumptions~\ref{assump:1} and~\ref{assump:2} hold, and assume that the target saddle point $\mathbf{u}^*$ has the same prescribed discrete mass as $\mathbf{u}^0$, or equivalently, $\mathbf{u}^0-\mathbf{u}^*\in\widehat{\mathcal V}_{\mathrm{eff}}$. Let $\eta\in\left(0,1/(\sqrt{\kappa}+1)\right)$, set $\mathbf{u}^{-1}=\mathbf{u}^0$, and choose $\beta$ and $\gamma$ according to~\eqref{eq:params}. If
\[
\|\mathbf{u}^0-\mathbf{u}^*\|_2\leq\frac{\widehat r}{2\sqrt{2}K},
\qquad
\widehat r<\min\left\{\delta,\frac{\sqrt{3}\mu\eta^2(\sqrt{\kappa}+1)^{3/2}}{16\sqrt{2}C_1}\right\},
\]
then the DD-ASD iterates converge to $\mathbf{u}^*$ and satisfy
\begin{equation}\label{eq:22}
\|\mathbf{u}^n-\mathbf{u}^*\|_2\leq2\sqrt{2}K\|\mathbf{u}^0-\mathbf{u}^*\|_2\rho^n,\qquad \rho:=1-\frac{2}{\sqrt{\kappa}+1}+2\eta<1.
\end{equation}
Here, $C_1$ is defined in Theorem~\ref{thm:3}, and $K$ is given in Corollary~\ref{cor:2}.
\end{theorem}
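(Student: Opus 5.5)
We would argue by the standard perturbed linear-recurrence (discrete Duhamel) technique combined with induction. Unrolling \eqref{eq:block_recurrence} from $\mathbf Z_0=[\mathbf e_0;\mathbf e_{-1}]$ gives
\[
\mathbf Z_{n}=\mathbf T^{n}\mathbf Z_0+\sum_{j=0}^{n-1}\mathbf T^{n-1-j}\mathbf R_j .
\]
By Theorem~\ref{thm:mass_conservation} and the hypothesis $\mathbf u^0-\mathbf u^*\in\widehat{\mathcal V}_{\mathrm{eff}}$, every $\mathbf e_j$ lies in $\widehat{\mathcal V}_{\mathrm{eff}}$. Consequently $\mathbf Z_0$ and each $\mathbf R_j$ lie in $\widehat{\mathcal V}_{\mathrm{eff}}\times\widehat{\mathcal V}_{\mathrm{eff}}$; indeed $\mathbf r^j=\mathbf e_{j+1}-(\ldots)$ is a combination of vectors in this subspace. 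Lemma~\ref{lemma:isometry} then lets us replace $\mathbf T^m$ by $\widetilde{\mathbf T}^m$ without changing norms.

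The next step is to verify that $\widetilde{\mathbf A}_*$ satisfies the hypothesis of Corollary~\ref{cor:2}. Since $\mathcal R^*$ commutes with the projected Hessian at $\mathbf u^*$, the matrix $\mathbf A_*=\mathcal R^*\nabla^2_{\mathbf P}\mathcal E_h(\mathbf u^*)$ has eigenvalues $|\lambda_i(\mathbf u^*)|$: the reflection flips the $k$ negative eigenvalues to positive ones and leaves the rest unchanged. By Assumption~\ref{assump:1}, the zero eigenvalue is simple with eigenvector $\mathbf e$, so on $\widehat{\mathcal V}_{\mathrm{eff}}$ the matrix $\widetilde{\mathbf A}_*$ is symmetric positive definite with spectrum in $[\mu,L]$. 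Corollary~\ref{cor:2} with the parameters \eqref{eq:params} then yields $\|\widetilde{\mathbf T}^m\|_2\le K(\sqrt\gamma)^m$. Writing $\sqrt\gamma=\rho-\eta$, we obtain
\[
\|\mathbf Z_n\|\le K(\rho-\eta)^n\|\mathbf Z_0\|+K\sum_{j<n}(\rho-\eta)^{n-1-j}\beta C_1\|\mathbf e_j\|^2 ,
\]
with $\|\mathbf Z_0\|=\sqrt2\|\mathbf e_0\|$.

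We then prove by strong induction the hypothesis $\|\mathbf e_j\|\le 2\sqrt2K\|\mathbf e_0\|\rho^j\le\widehat r\rho^j$. This keeps all iterates in $U(\mathbf u^*,\delta)$, so that Theorem~\ref{thm:3} applies at every step. Inserting the hypothesis gives
\[
\|\mathbf e_n\|\le\|\mathbf Z_n\|\le \sqrt2K\|\mathbf e_0\|\rho^n+K\beta C_1\,2\sqrt2K\|\mathbf e_0\|\,\widehat r\sum_{j<n}(\rho-\eta)^{n-1-j}\rho^{2j}.
\]
Since $\rho<1$, the sum is bounded by $\rho^{n-1}\sum_i\bigl((\rho-\eta)/\rho\bigr)^i\le \rho^{n}/\eta$. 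It remains to show that the second term is at most $\sqrt2K\|\mathbf e_0\|\rho^n$, which reduces to the condition $2K\beta C_1\widehat r/\eta\le 1$.

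The main obstacle is this final constant check. It requires the explicit values $\beta=4/(\sqrt L+\sqrt\mu)^2=4/\bigl(\mu(\sqrt\kappa+1)^2\bigr)$ and $K=2\sqrt2(\sqrt\kappa+1)^{1/2}/(\sqrt3\eta)$, together with the assumed bound on $\widehat r$. With these values, $2K\beta C_1\widehat r/\eta=16\sqrt2C_1\widehat r/\bigl(\sqrt3\mu\eta^2(\sqrt\kappa+1)^{3/2}\bigr)<1$, exactly matching the hypothesis. We also need care with the base case $n=0$, where $\|\mathbf e_0\|\le 2\sqrt2K\|\mathbf e_0\|$ because $K\ge1$, and with $\rho\ge\eta$, so that geometric ratios are well defined. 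Finally, $\rho<1$ follows from $\eta<1/(\sqrt\kappa+1)$, and convergence is immediate from \eqref{eq:22}.
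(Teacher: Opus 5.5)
Your proposal is correct and follows essentially the same route as the paper. Both arguments unroll $\mathbf Z_{n+1}=\mathbf T\mathbf Z_n+\mathbf R_n$ Duhamel-style, restrict to $\widehat{\mathcal V}_{\mathrm{eff}}\times\widehat{\mathcal V}_{\mathrm{eff}}$ via Lemma~\ref{lemma:isometry}, use the bound $\|\mathbf T^m\mathbf Y\|_2\le K(\sqrt\gamma)^m\|\mathbf Y\|_2$ from Corollary~\ref{cor:2}, and close an induction with the identical constant condition $2\beta KC_1\widehat r/\eta<1$. The only difference is cosmetic: you induct on $\|\mathbf e_j\|_2$ (bounding $\|\mathbf e_j\|_2^2$ by a $\rho^{2j}$ term and then discarding one factor of $\rho^j$), whereas the paper inducts on $\|\mathbf Z_j\|_2\le 2K\rho^j\|\mathbf Z_0\|_2$ together with $\|\mathbf Z_j\|_2\le\widehat r$.
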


\begin{proof}
By Theorem~\ref{thm:3}, the error satisfies
\[
\mathbf Z_{n+1}=\mathbf T\mathbf Z_n+\mathbf R_n,\qquad \|\mathbf R_n\|_2=\|\mathbf r^n\|_2\leq\beta C_1\|\mathbf u^n-\mathbf u^*\|_2^2
\]
whenever $\|\mathbf u^n-\mathbf u^*\|_2<\delta$. Iterating the recurrence gives
\begin{equation}\label{eq:23}
\mathbf Z_{n+1}=\mathbf T^{n+1}\mathbf Z_0+\sum_{j=0}^n\mathbf T^{n-j}\mathbf R_j.
\end{equation}

The same-mass condition and the invariance of the projected iteration imply that $\mathbf u^n-\mathbf u^*$ and $\mathbf r^n$ lie in $\widehat{\mathcal V}_{\mathrm{eff}}$. Let $\mathbf V$ be the orthonormal basis matrix introduced in Lemma~\ref{lemma:isometry}. The restriction of $\mathbf A_*=\mathcal R^*\nabla_{\mathbf P}^2\mathcal E_h(\mathbf u^*)$ to this subspace is
\[
\widetilde{\mathbf A}_*:=\mathbf V^\top\mathbf A_*\mathbf V,
\qquad
\sigma(\widetilde{\mathbf A}_*)=\{-\lambda_1^*,\ldots,-\lambda_k^*,\lambda_{k+2}^*,\ldots,\lambda_{N_{in}}^*\}\subset[\mu,L].
\]
Thus, $\widetilde{\mathbf A}_*$ is positive definite. Lemma~\ref{lemma:isometry} and Corollary~\ref{cor:2} therefore give, for vectors in $\widehat{\mathcal V}_{\mathrm{eff}}\times\widehat{\mathcal V}_{\mathrm{eff}}$,
\[
\|\mathbf T^m\mathbf Y\|_2\leq K(\sqrt{\gamma})^m\|\mathbf Y\|_2,\qquad m\geq0.
\]
Applying this estimate to~\eqref{eq:23} yields
\begin{equation}\label{eq:propagation_bound}
\|\mathbf Z_{n+1}\|_2\leq K(\sqrt{\gamma})^{n+1}\|\mathbf Z_0\|_2+K\sum_{j=0}^n(\sqrt{\gamma})^{n-j}\|\mathbf r^j\|_2.
\end{equation}

We prove simultaneously that
\[
\|\mathbf Z_n\|_2\leq2K\rho^n\|\mathbf Z_0\|_2
\qquad\text{and}\qquad
\|\mathbf Z_n\|_2\leq\widehat r
\]
for all $n\geq0$. Since $\mathbf u^{-1}=\mathbf u^0$,
\[
\|\mathbf Z_0\|_2=\sqrt{2}\|\mathbf u^0-\mathbf u^*\|_2\leq\frac{\widehat r}{2K},
\]
so both estimates hold for $n=0$. Suppose they hold for $0\leq j\leq m$. Then $\|\mathbf u^j-\mathbf u^*\|_2\leq\widehat r<\delta$, and~\eqref{eq:propagation_bound} gives
\begin{align}
\|\mathbf Z_{m+1}\|_2
&\leq K(\sqrt{\gamma})^{m+1}\|\mathbf Z_0\|_2+K\beta C_1\sum_{j=0}^m(\sqrt{\gamma})^{m-j}\|\mathbf Z_j\|_2^2 \nonumber\\
&\leq K\rho^{m+1}\|\mathbf Z_0\|_2+2\beta K^2C_1\widehat r\sum_{j=0}^m(\sqrt{\gamma})^{m-j}\rho^j\|\mathbf Z_0\|_2 \nonumber\\
&\leq\left(1+\frac{2\beta KC_1\widehat r}{\rho-\sqrt{\gamma}}\right)K\rho^{m+1}\|\mathbf Z_0\|_2. \label{eq:27}
\end{align}
Since $\rho-\sqrt{\gamma}=\eta$, the definition of $\widehat r$ implies
\[
\frac{2\beta KC_1\widehat r}{\rho-\sqrt{\gamma}}
=
\frac{16\sqrt{2}C_1\widehat r}{\sqrt{3}\mu\eta^2(\sqrt{\kappa}+1)^{3/2}}
<1.
\]
Consequently,
\[
\|\mathbf Z_{m+1}\|_2\leq2K\rho^{m+1}\|\mathbf Z_0\|_2\leq\widehat r,
\]
where the second inequality follows from $\rho<1$ and $\|\mathbf Z_0\|_2\leq\widehat r/(2K)$. The two estimates therefore hold for all $n\geq0$ by induction. Finally,
\[
\|\mathbf u^n-\mathbf u^*\|_2\leq\|\mathbf Z_n\|_2\leq2K\rho^n\|\mathbf Z_0\|_2=2\sqrt{2}K\rho^n\|\mathbf u^0-\mathbf u^*\|_2,
\]
which proves~\eqref{eq:22}.
\end{proof}

\begin{remark}
In Theorem \ref{thm:4}, if we select $\eta = \frac{1}{2(\sqrt{\kappa}+1)}$, the local linear convergence rate becomes $\rho = 1 - \frac{1}{\sqrt{\kappa}+1}$. Compared to the proven convergence rate of $1 - \mathcal{O}(\frac{1}{\kappa})$ for the standard discrete HiSD algorithm, the DD-ASD algorithm equipped with momentum acceleration achieves an order of $1 - \mathcal{O}(\frac{1}{\sqrt{\kappa}})$. This yields substantial acceleration, especially under large condition numbers $\kappa$ (ill-conditioned systems) typical in diffuse-domain phase-field models.
\end{remark}
\section{Numerical Experiments}


This section examines the accuracy, structure preservation, efficiency, and applicability of the proposed DD-ASD framework for wetting transition problems.
Unless otherwise stated, the phase-field width is fixed at $\epsilon=0.015$, and the spatial grids are chosen to resolve the diffuse interface.
\subsection{Comprehensive Validation of the DD-ASD Framework}

To validate the proposed DD-ASD framework, we evaluate its performance from both physical and numerical perspectives. First, we verify the algorithm's capability to accurately resolve complex physical solution landscapes and their underlying transition pathways. Subsequently, we examine the step-size self-convergence of the discrete
iteration and verify numerically that mass is conserved.
\subsubsection{Solution Landscape as a Physical Benchmark}
\begin{figure}[H]
  \centering
  \includegraphics[page=1, trim=0cm 14cm 1cm 0.5cm, clip, width=0.6\textwidth]{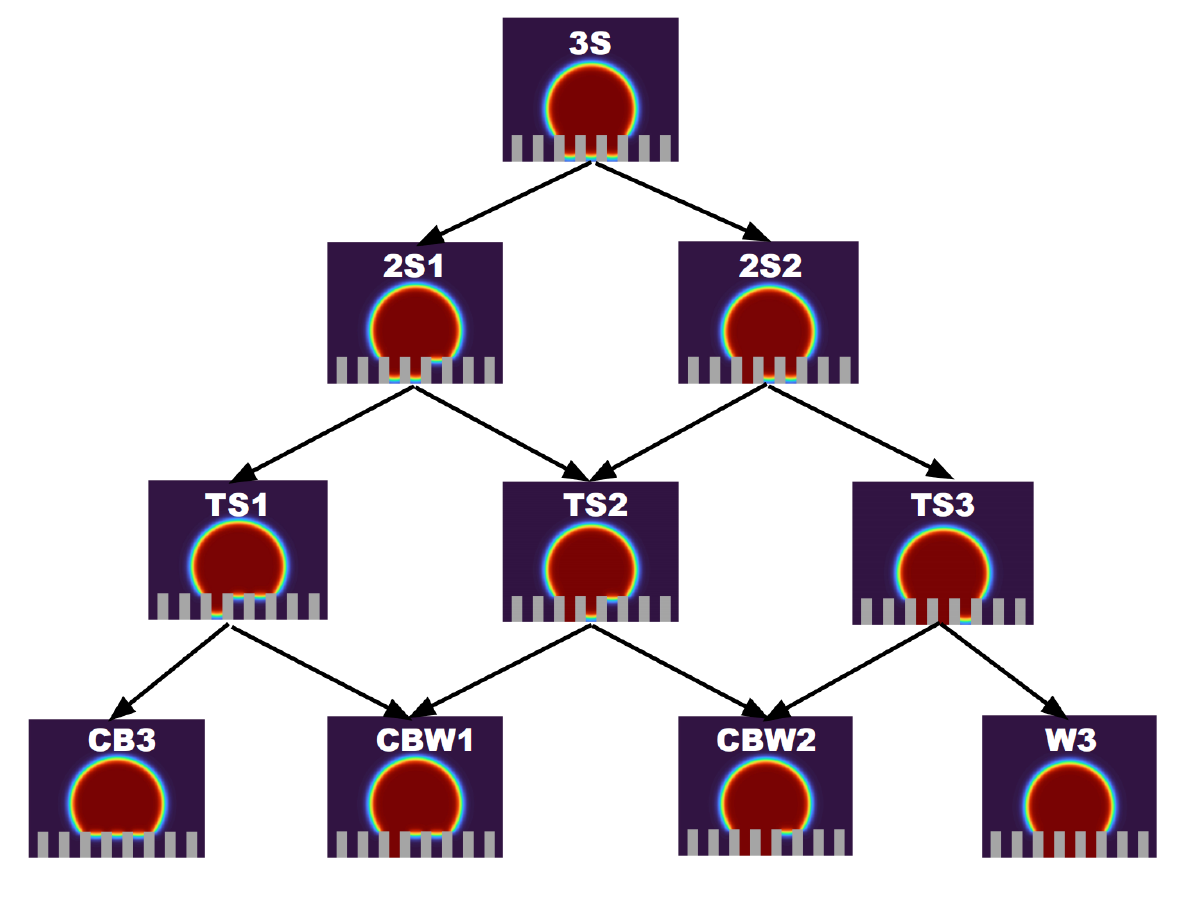}
  \caption{Solution landscape for the wetting transition obtained with the pillar gap $w=0.06$, pillar depth $d=0.15$, and Young's angle $\theta=102^\circ$.}
  \label{fig:solution_landscape}
\end{figure}
We first deploy our solver to  compute the solution landscape of a droplet on a micro-structured substrate. To ensure a direct and meaningful comparison, the geometric configurations (pillar gap $w=0.06$, depth $d=0.15$) and the physical parameter (Young's angle $\theta=102^\circ$) are chosen to be strictly identical to those investigated via conventional FEM in recent literature \cite{zhang2025solution}. 

It must be emphasized that resolving this highly non-convex, multi-stable wetting network serves as a stringent physical benchmark. The algorithm must precisely capture not only the stable Cassie-Baxter and Wenzel minima, but also the intermediate transition pathways and saddle points of index up to three. As illustrated in Figure~\ref{fig:solution_landscape}, the multi-stable solution landscape is successfully resolved by our DD-ASD framework. The calculated pathways and topological connections exhibit good agreement with the reference FEM data \cite{zhang2025solution} and theoretical expectations \cite{bormashenko2015progress}, supporting the ability of DD-ASD to recover the reported wetting transition landscape.

\subsubsection{Discrete Step Size Self-Convergence}

We assess the step size self-convergence of the discrete DD-ASD iteration. In this experiment, $h=0.005$ and $\gamma=0.5$ are fixed, and we write $\beta=\Delta t$. A reference solution is computed with $\Delta t_{\mathrm{ref}}=5\times10^{-6}$, and all computations are advanced to $T=5$.

Table~\ref{tab:time_convergence} presents the $L^2$ and $L^\infty$ errors and the corresponding convergence orders for $h=0.005$ and $\gamma=0.5$. The results exhibit approximately first-order self-convergence of the discrete iteration over the tested range of step sizes.
\begin{table}[htbp]
  \centering
  \caption{Errors relative to the reference solution and observed step-size self-convergence orders.}
  \label{tab:time_convergence}
  \renewcommand{\arraystretch}{1.15} 
  \begin{tabular}{@{} c c c c c @{}}
    \toprule
    \textbf{$\Delta t$} & \textbf{$L^2$ Error} & \textbf{Order} & \textbf{$L^\infty$ Error} & \textbf{Order} \\
    \midrule
    $2\times10^{-4}$   & $1.2223 \times 10^{-6}$ & ---   & $1.3481 \times 10^{-5}$ & ---   \\
    $1\times10^{-4}$   & $5.8097 \times 10^{-7}$ & 1.0730 & $6.4078 \times 10^{-6}$ & 1.0730 \\
    $5\times10^{-5}$   & $2.7180 \times 10^{-7}$ & 1.0959 & $2.9978 \times 10^{-6}$ & 1.0959 \\
    $2.5\times10^{-5}$ & $1.2003 \times 10^{-7}$ & 1.1791 & $1.3239 \times 10^{-6}$ & 1.1791 \\
    \bottomrule
  \end{tabular}
\end{table}


\subsubsection{Strict Mass Conservation}

Mass conservation is a fundamental requirement in modeling droplet transitions. To examine this property numerically, we record the discrete mass every 200 iterations during each saddle-point search.

Figure~\ref{fig:mass_error} reports the deviation from the initial mass for the searches of the index-1, index-2, and index-3 saddle points. At each recorded iteration $n$, the mass deviation is defined by $\Delta M_h^n=M_h(\mathbf u^n)-M_h(\mathbf u^0)$. Each search is continued until the projected gradient norm reaches the prescribed tolerance of $10^{-6}$. The deviations remain of order $10^{-15}$ throughout the computations, consistent with floating-point roundoff errors and the exact-arithmetic conservation result in Theorem~\ref{thm:mass_conservation}.

\begin{figure}[h]
  \centering
  \begin{minipage}[b]{0.32\textwidth}
    \centering
    \includegraphics[width=\textwidth]{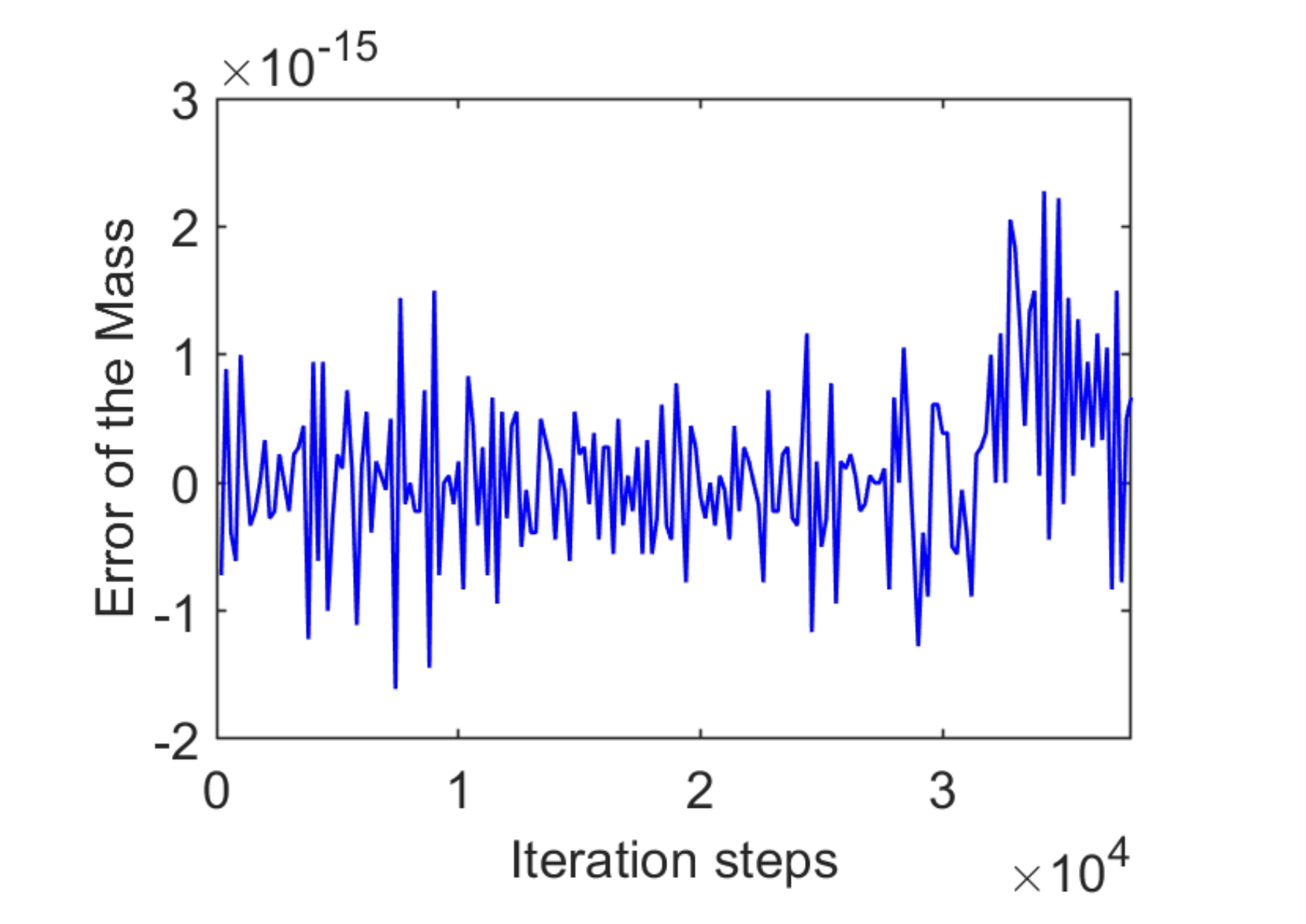} 
    \vspace{0.1cm}
    \centerline{(a) Index-1 saddle}
  \end{minipage}
  \hfill
  \begin{minipage}[b]{0.32\textwidth}
    \centering
    \includegraphics[width=\textwidth]{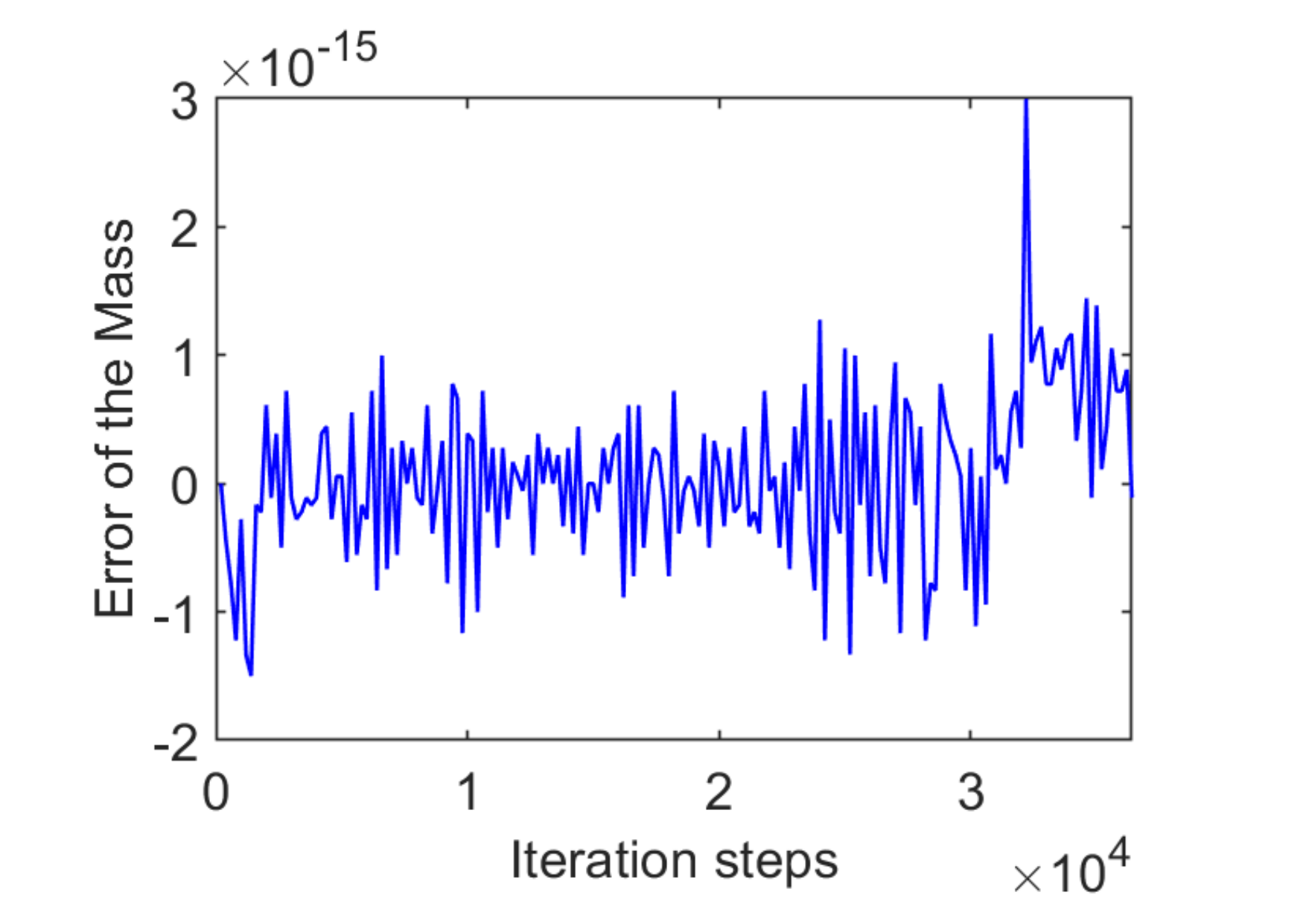} 
    \vspace{0.1cm}
    \centerline{(b) Index-2 saddle}
  \end{minipage}
  \hfill
  \begin{minipage}[b]{0.32\textwidth}
    \centering
    \includegraphics[width=\textwidth]{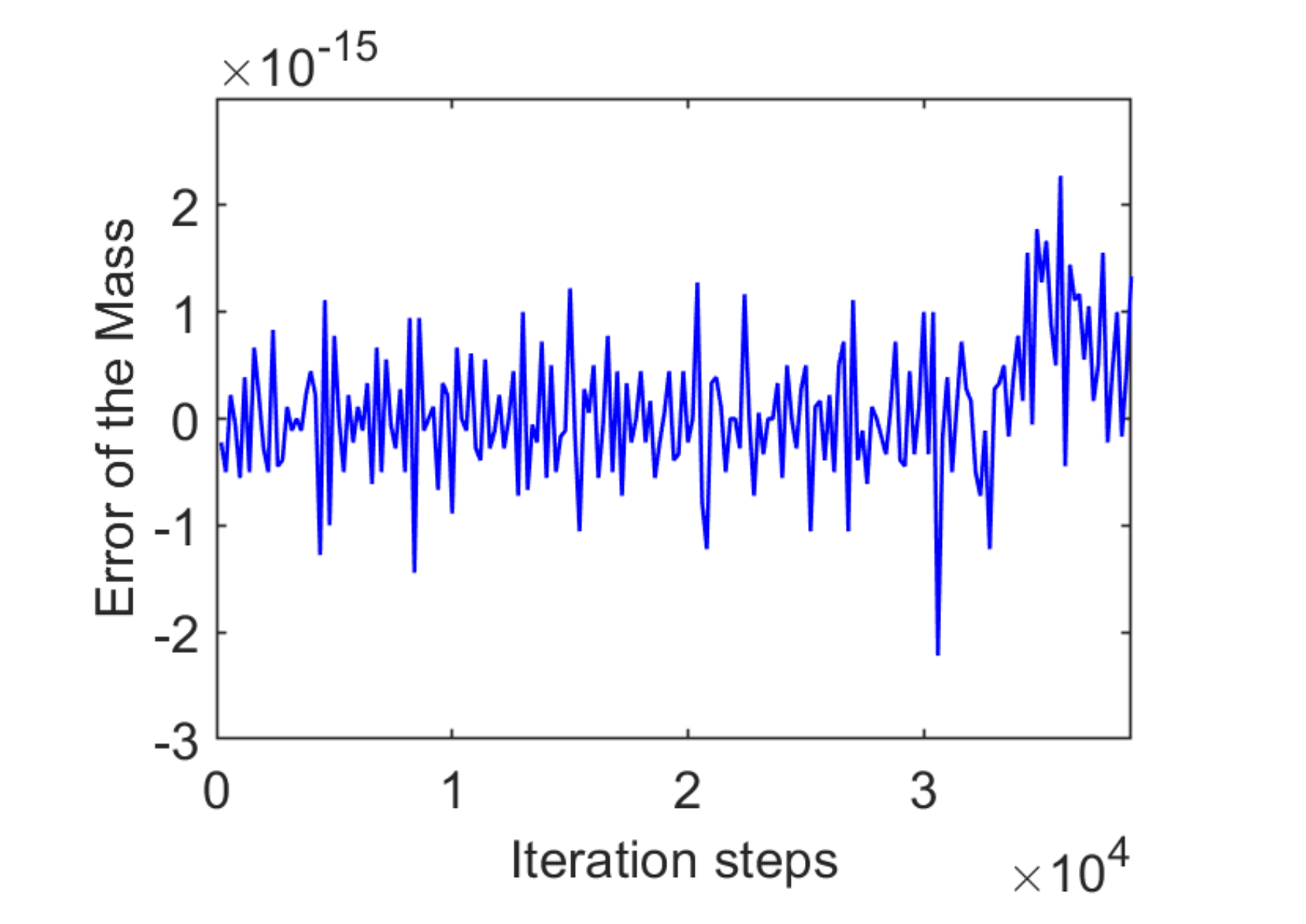} 
    \vspace{0.1cm}
    \centerline{(c) Index-3 saddle}
  \end{minipage}
  \caption{Evolution of the mass variation for the index-1, index-2, and
index-3 saddle-point searches, evaluated every 200 iterations.}
  \label{fig:mass_error}
\end{figure}

\subsubsection{Mesh Sensitivity for a Representative W3--TS3 Transition}
As a representative test of mesh sensitivity, we repeat one W3--TS3 transition identified in the preceding solution-landscape experiment. We compare the resulting TS3 configuration and the corresponding energy barrier on four Cartesian grids with $h=0.015$, $0.010$, $0.0075$, and $0.005$. The phase-field width, Young's angle, and substrate geometry are fixed at the values used in Figure \ref{fig:solution_landscape}. On each grid, the computation starts from the corresponding W3 state,
and both W3 and TS3 satisfy the discrete mass constraint $M_h(\mathbf u_h)=h^2\sum_{i,j}(u_h)_{i,j}(\psi_b)_{i,j}=M_0$. Each search is terminated when $\|\nabla_{\mathbf P}\mathcal E_h(\mathbf u_h)\|_2<10^{-6}$, where $\|\cdot\|_2$ denotes the Euclidean norm.

Starting from the W3 configuration on each grid, the upward searches recover the characteristic TS3 state, as shown in Figure~\ref{fig:compare}. The resulting configurations exhibit the same qualitative pattern of liquid across the tested resolutions.

\begin{figure}[htbp]
    \centering
    \includegraphics[
        page=1,
        trim=1cm 26cm 2cm 0cm,
        clip,
        width=0.95\textwidth
    ]{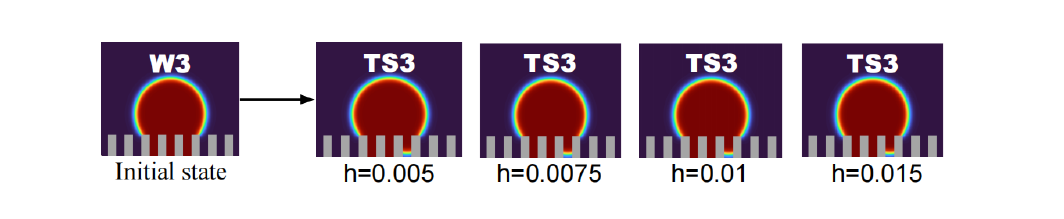}
    \caption{
    Comparison of the TS3 configurations computed on different
    Cartesian grids. The leftmost panel shows W3, from which
    the upward searches are initiated. The remaining panels
    show the resulting TS3 configurations for
    $h=0.005$, $0.0075$, $0.010$, and $0.015$,
    from left to right.
    }
    \label{fig:compare}
\end{figure}

Table~\ref{tab:energy_barrier_mesh} reports the computed barriers $
\Delta\mathcal{E}_h
=
\mathcal{E}_h(u_h^{\mathrm{TS3}})
-
\mathcal{E}_h(u_h^{\mathrm{W3}})
$ and their relative differences from the finest-grid result, with $D_h = |\Delta\mathcal{E}_h - \Delta\mathcal{E}_{h_{\mathrm{ref}}}| / |\Delta\mathcal{E}_{h_{\mathrm{ref}}}|$.

\begin{table}[htbp]
    \centering
    \caption{
    W3--TS3 energy barriers under spatial refinement.
    Relative differences are measured against the result
    at $h_{\mathrm{ref}}=0.005$.
    }
    \label{tab:energy_barrier_mesh}
    \renewcommand{\arraystretch}{1.15}
    \begin{tabular}{@{}ccc@{}}
        \toprule
        Mesh size $h$
        & Energy barrier $\Delta\mathcal{E}_h$
        & Relative difference $D_h$ \\
        \midrule
        0.0150 & 0.0437 & 7.11\% \\
        0.0100 & 0.0423 & 3.68\% \\
        0.0075 & 0.0417 & 2.21\% \\
        0.0050 & 0.0408 & ---    \\
        \bottomrule
    \end{tabular}
\end{table}

The barrier decreases monotonically from $0.0437$ to $0.0408$ as the mesh is refined. The two finest grids differ by $9.0\times10^{-4}$, corresponding to approximately $2.21\%$ of the finest-grid barrier. Together, the configuration comparison and the barrier values show that the qualitative TS3 morphology is preserved across the tested grids, while the barrier retains a measurable dependence on spatial resolution.

\subsection{Computational Efficiency and Algorithmic Acceleration}

We compare FEM-SD with DD-SD to assess the effect of the Cartesian diffuse-domain discretization, and DD-SD with DD-ASD to assess the additional effect of momentum.

All computations use
$\Delta t=3\times10^{-4}$ and stop when
$\|\nabla_{\mathbf P}\mathcal E_h\|_2<10^{-6}$.
The reported times include only the main iteration and exclude
initialization and post-processing. The comparisons follow the upward
pathway
$\mathrm{W3}\rightarrow\mathrm{TS3}\rightarrow
\mathrm{2S2}\rightarrow\mathrm{3S}$
shown in Figure~\ref{fig:solution_landscape}.

\begin{table}[htbp]
\centering
\caption{Computation times (in seconds) for different methods and
spatial resolutions.}
\label{tab:computational_times}
\renewcommand{\arraystretch}{1.15}
\setlength{\tabcolsep}{4.5pt}
\small
\begin{tabular}{@{} l c r r r @{}}
\toprule
\textbf{Method}
& \textbf{Mesh Size}
& \shortstack{\textbf{Index-1}\\\textbf{(TS3)}}
& \shortstack{\textbf{Index-2}\\\textbf{(2S2)}}
& \shortstack{\textbf{Index-3}\\\textbf{(3S)}} \\
\midrule
FEM-SD
& $h_{\max}=0.015$
& 1974.565 & 3583.747 & 3435.885 \\
\midrule
DD-SD
& $h=0.015$
& 32.897 & 77.899 & 250.428 \\
DD-SD
& $h=0.005$
& 359.447 & 835.192 & 1066.751 \\
\midrule
DD-ASD ($\gamma=0.4$)
& $h=0.015$
& 17.013 & 35.488 & 133.749 \\
DD-ASD ($\gamma=0.4$)
& $h=0.005$
& 241.473 & 511.706 & 669.214 \\
\bottomrule
\end{tabular}
\end{table}

Table~\ref{tab:computational_times} shows that DD-SD requires less
computation time than FEM-SD in all three searches. At the nominal mesh
scale $0.015$, DD-SD takes $32.897$--$250.428$\,s, compared with
$1974.565$--$3583.747$\,s for FEM-SD. Even with the finer Cartesian grid
$h=0.005$, the DD-SD computations remain faster than the FEM-SD
benchmark. This comparison shows that part of the reduction in
computational cost comes from the diffuse-domain discretization itself.

Momentum gives a further reduction in computation time, but its effect
depends on $\gamma$. For the representative index-1 search at $h=0.015$,
the computation times for $\gamma=0.1$, $0.2$, $0.3$, and $0.4$ are
$26.426$, $23.326$, $21.556$, and $17.013$\,s, respectively. Further increasing $\gamma$ did not reduce the computation time in our tests and
could lead to overshooting during the early stage of the upward search.
We therefore use $\gamma=0.4$ in Table~\ref{tab:computational_times}.
Compared with DD-SD, DD-ASD reduces the computation times by
$48.3\%$, $54.4\%$, and $46.6\%$ at $h=0.015$, and by $32.8\%$,
$38.7\%$, and $37.3\%$ at $h=0.005$, for the index-1, index-2, and
index-3 searches, respectively.

\begin{remark}[Adaptive momentum strategy]
\label{rmk:adaptive_momentum}
The suitable value of $\gamma$ may change during the saddle search.
A smaller value is used in the initial stage to reduce possible
overshooting. After the projected gradient norm passes its initial peak
and starts to decrease, a larger value can be used to speed up the later
iterations. We use the following switching rule.

Let
\[
a_n:=\|\nabla_{\mathbf P}\mathcal E_h(\mathbf u^n)\|_2,
\qquad
N_{\mathrm{peak}}
:=
\min\left\{
n\ge2:
a_n<a_{n-1},\;
a_{n-1}>a_{n-2}
\right\}.
\]
Define
\begin{equation}
N_{\mathrm{switch}}
:=
\min\left\{
n\ge N_{\mathrm{peak}}:
a_n<1,\;
a_n<a_{n-1}
\right\},
\end{equation}
and set
\begin{equation}
\gamma^n=
\begin{cases}
\gamma_{\mathrm{init}}, & n<N_{\mathrm{switch}},\\
\gamma_{\mathrm{tar}}, & n\ge N_{\mathrm{switch}}.
\end{cases}
\end{equation}
If no switching index is found, $\gamma^n$ remains equal to
$\gamma_{\mathrm{init}}$. This switching rule is used only in the
numerical implementation. The convergence analysis in Section~3 is for
a fixed momentum parameter.
\end{remark}

\subsection{Applications to Complex Pillar Designs and Parametric Analysis}
\subsubsection{Adaptability to Diverse Pillar Geometries}

The results above show that DD-ASD remains computationally feasible on the fine grid with $h=0.005$; see Table~\ref{tab:computational_times}. This allows more detailed substrate geometries to be represented on a Cartesian grid by reducing $h$, without constructing a new body-fitted unstructured mesh for each geometry.

We illustrate this geometric flexibility by computing transition pathways for substrates with rectangular, semicircular, and triangular pillars. Figure~\ref{fig:complex_geometries} shows the computed transition states, and Table~\ref{tab:energy_barriers_geometry} reports the corresponding energy barriers. Among the three geometries, the triangular pillars (Structure C) have the lowest maximum barrier, $0.0064$, whereas the rectangular pillars (Structure A) have the highest, $0.0146$. Together with the timing results, these examples show that DD-ASD can treat varied pillar geometries at a fine spatial resolution without geometry-specific remeshing.

\begin{figure}[htbp]
  \centering
  \includegraphics[width=1\textwidth]{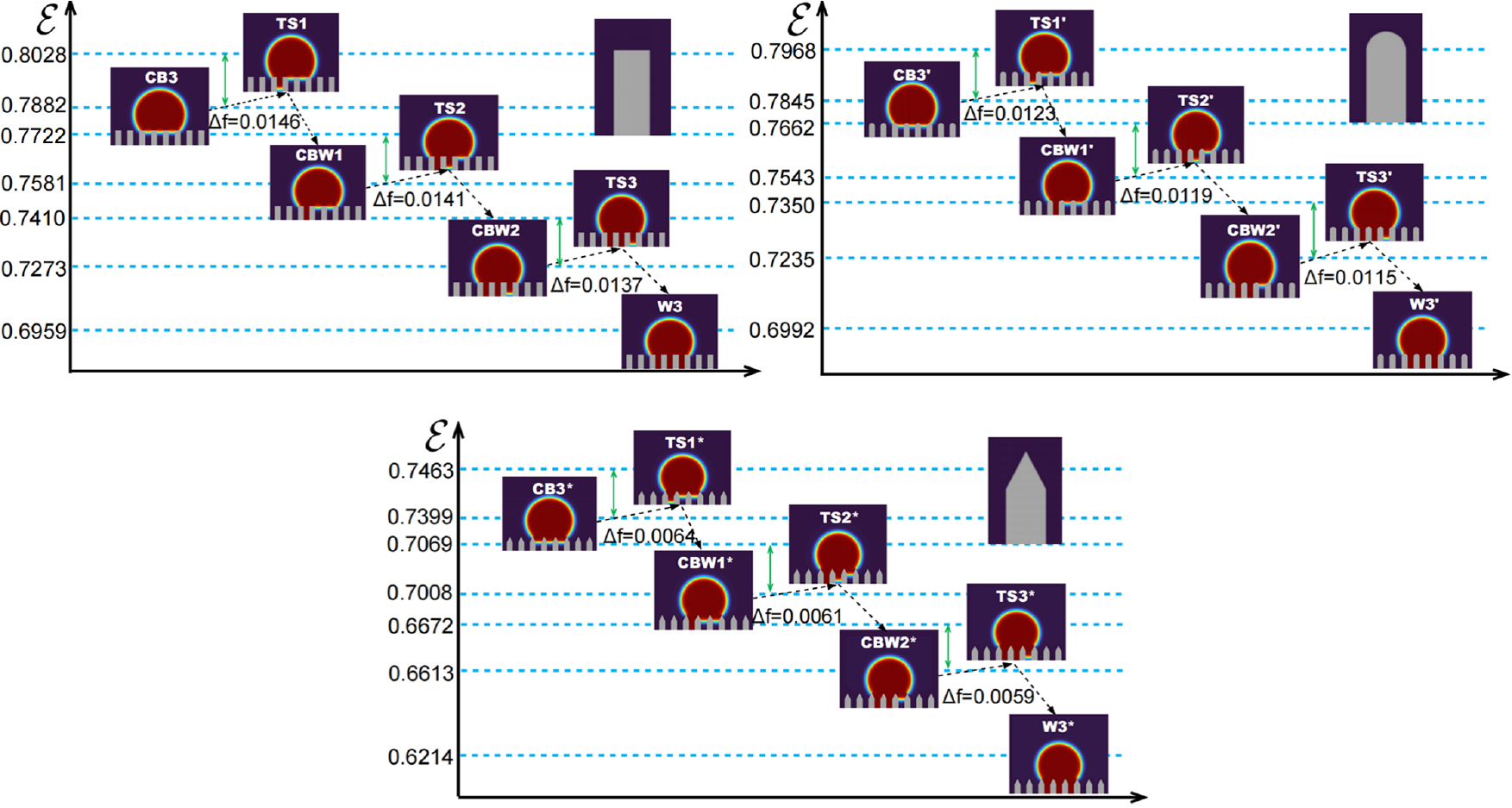}
  \caption{Transition paths of the droplet on substrates with different pillar geometries.}
  \label{fig:complex_geometries}
\end{figure}

\begin{table}[htbp]
  \centering
   \caption{Energy barriers for the three pillar geometries. Barriers 1, 2, and 3 correspond to CB3 $\to$ TS1, CBW1 $\to$ TS2, and CBW2 $\to$ TS3, respectively.}
  \label{tab:energy_barriers_geometry}
  \renewcommand{\arraystretch}{1.2} 
  \begin{tabular}{@{} l c c c @{}}
    \toprule
    \multirow{2}{*}{\textbf{Transition Phase}} & \textbf{Structure A} & \textbf{Structure B} & \textbf{Structure C} \\
    & \textbf{(Rectangular)} & \textbf{(Semicircular)} & \textbf{(Triangular)} \\
    \midrule
    Energy barrier 1   & 0.0146 & 0.0123 & 0.0064 \\
    Energy barrier 2   & 0.0141 & 0.0119 & 0.0061 \\
    Energy barrier 3   & 0.0137 & 0.0115 & 0.0059 \\
    \midrule
    \textbf{Max Energy barrier}        & \textbf{0.0146} & \textbf{0.0123} & \textbf{0.0064} \\
    \bottomrule
  \end{tabular}
\end{table}

\subsubsection{Parametric Study of Transition Energy Barriers}
We next examine how variations in geometric and physical parameters affect the wetting-transition energy barriers. Figure~\ref{fig:parameter_effects}(a) shows the dependence of the energy barriers on four parameters: Young's contact angle $\theta$, pillar gap $w$, pillar depth $d$, and central angle $\phi$. Figure~\ref{fig:parameter_effects}(b) presents the corresponding geometric configuration, illustrating how $\phi$ characterizes the curvature of the pillar caps, along with the definitions of the spacing $w$ and total depth $d$. Over the parameter ranges considered, the largest energy barrier consistently occurs during the initial transition from the CB3 state to the CBW1 state. These results clarify how the geometric and wetting parameters influence the transition barriers and provide numerical guidance for designing textured surfaces to facilitate or inhibit wetting transitions.

\begin{figure}[htbp]
  \centering
  \begin{minipage}[b]{0.7\textwidth}
    \centering
    \includegraphics[width=\textwidth]{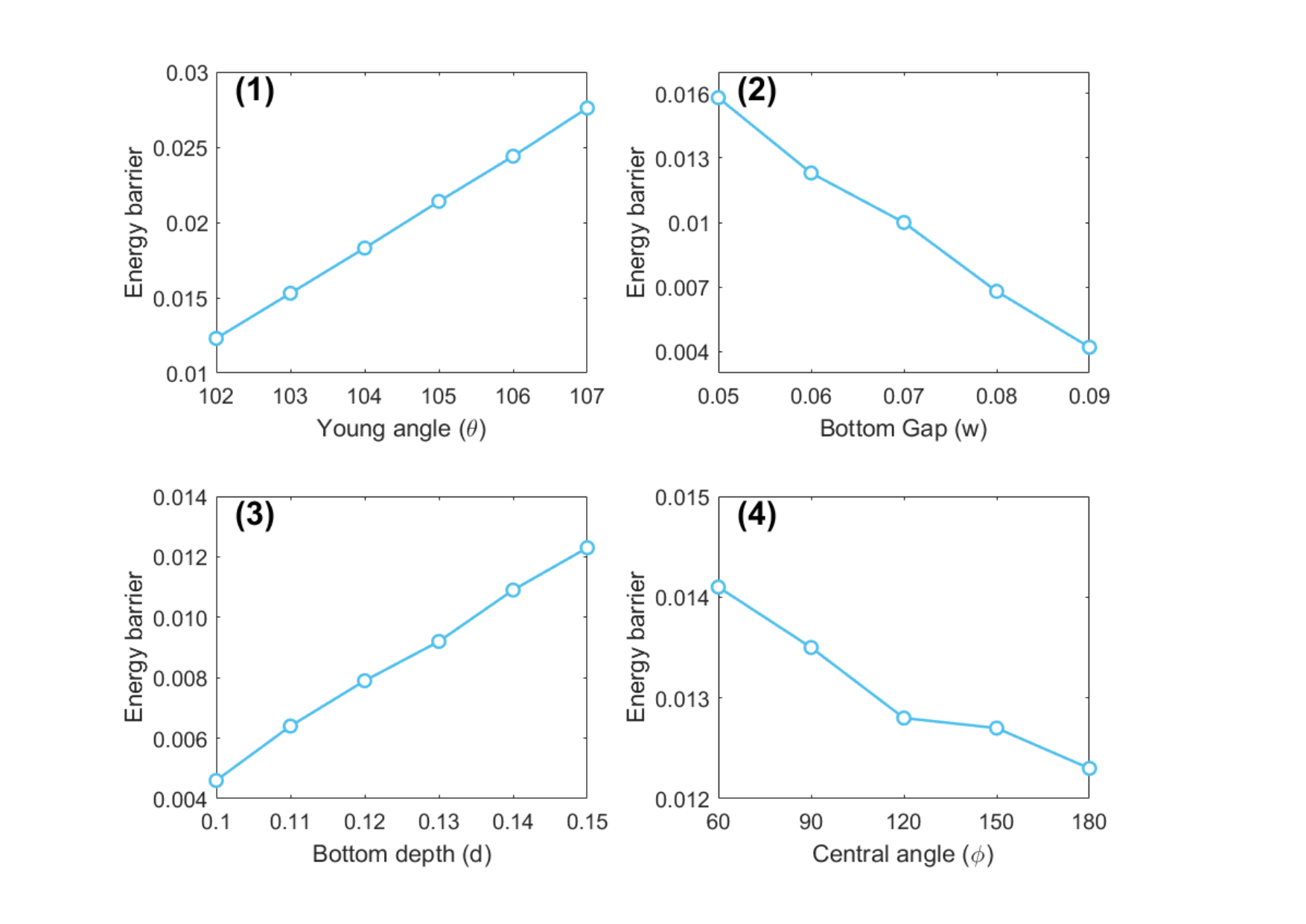}
    \centerline{(a)}
  \end{minipage}\hfill
  \begin{minipage}[b]{0.3\textwidth}
    \centering
    \includegraphics[width=\textwidth]{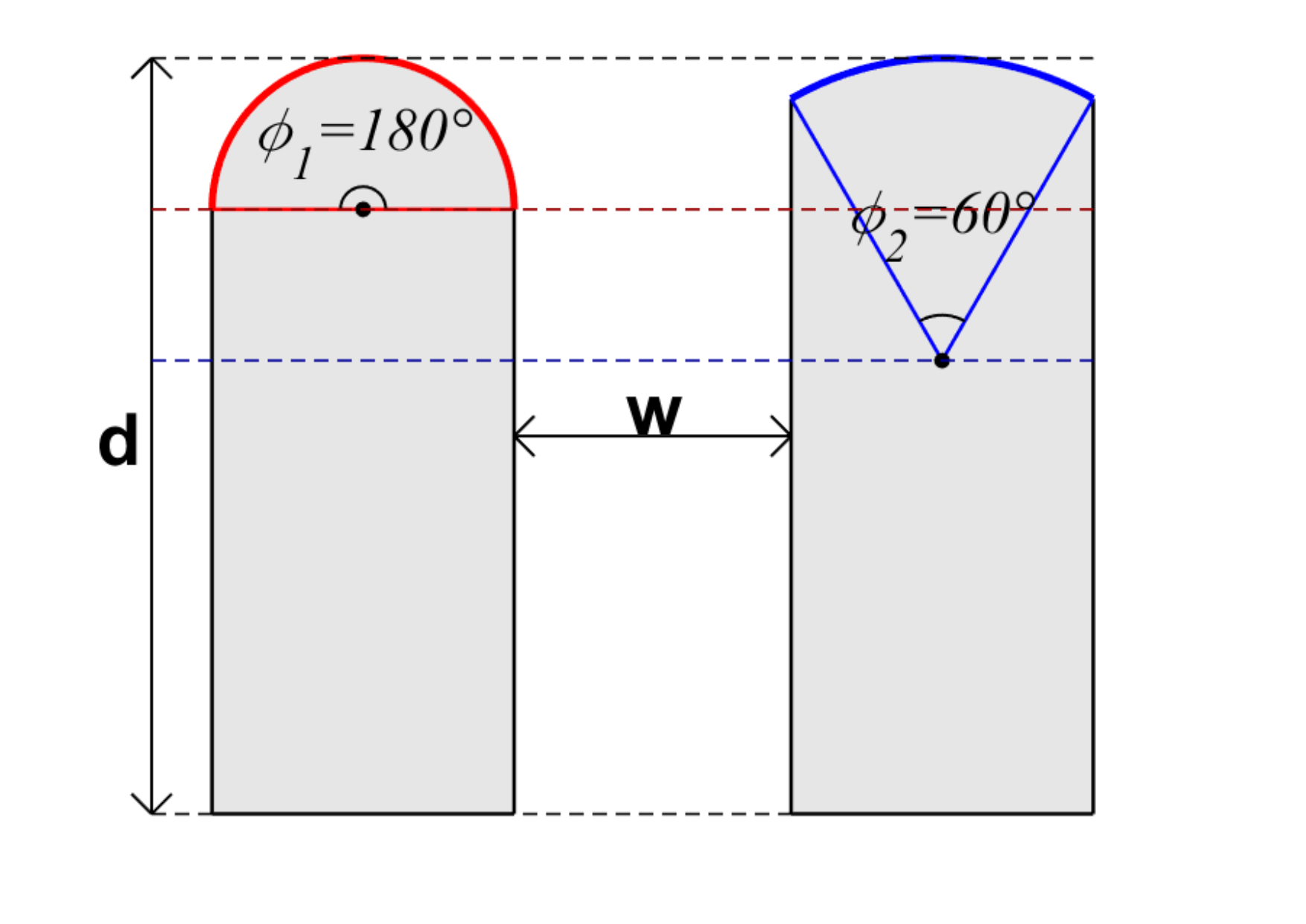}
    \vspace{0.4cm} 
    \centerline{(b)}
  \end{minipage}
  \caption{(a) Systematic parametric study of energy barriers under four distinct variations: (1) varying $\theta$ (with $w=0.06, d=0.15, \phi=180^\circ$); (2) varying $w$ (with $\theta=102^\circ, d=0.15, \phi=180^\circ$); (3) varying $d$ (with $\theta=102^\circ, w=0.06, \phi=180^\circ$); and (4) varying $\phi$ (with $\theta=102^\circ, d=0.15, w=0.06$). (b) Schematic diagram of the micro-structured substrate, demonstrating the definitions of the pillar spacing $w$, the total pillar depth $d$, and the central angles (e.g., $\phi_1$ and $\phi_2$) which govern the curvature of the convex pillar caps.}
 \label{fig:parameter_effects}
\end{figure}
\subsection{3D Wetting Transitions and Solution Landscapes}

We next examine the applicability of DD-ASD to 3D wetting transitions on textured substrates. In the first example, the interfacial width, grid size, momentum parameter, and time step are set to $\epsilon=0.015$, $h=0.015$, $\gamma=0.5$, and $\Delta t=0.005$, respectively. As shown in Figure~\ref{fig:3d_pathway}, DD-ASD identifies an index-1 saddle point 1S and the associated transition pathway connecting the metastable Cassie--Baxter (CB) and Wenzel (W) states. In this case,
the CB-to-W transition is described by a single index-1 saddle point,
providing a simple 3D transition pathway. The computation
requires $172.303$\,s, indicating that the 3D saddle-point search can be performed at a practical computational cost.

\begin{figure}[htbp]
    \centering
    \includegraphics[width=1\textwidth, trim=0cm 23.5cm 0cm 0cm, clip]{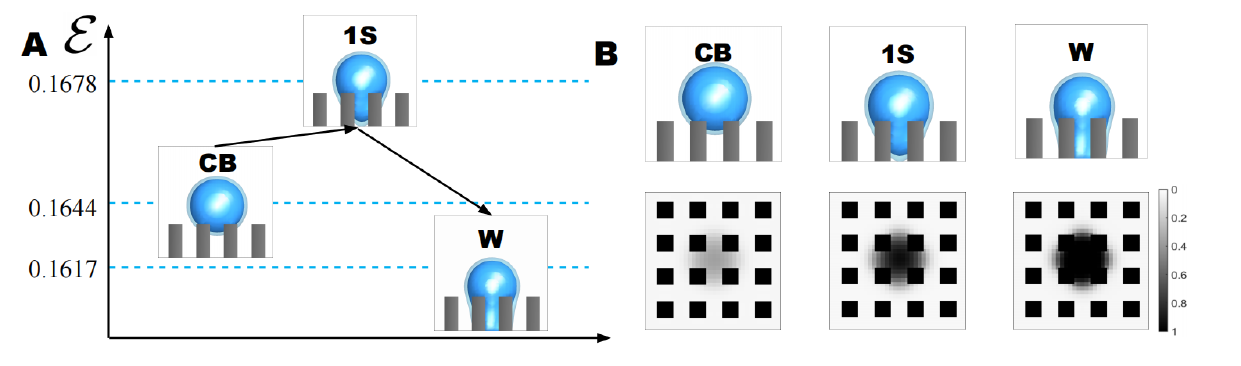}
    \caption{
3D wetting transition pathway from the Cassie--Baxter (CB) state to the Wenzel (W) state through an index-1 saddle point (1S).
Panel (A) shows the corresponding energy profile along the transition pathway.
In panel (B), the three columns represent the CB state, the transition state 1S, and the metastable W state, respectively.
The upper row shows side views in the $xz$-plane, while the lower row shows top views of the substrate; the grayscale indicates the average liquid fraction inside the grooves.
The parameters are $\epsilon=0.015$, pillar depth $d=0.15$, pillar gap $w=0.06$, and Young's angle $\theta=102^\circ$.
}
     \label{fig:3d_pathway}
\end{figure}

We then change the wetting parameters, in particular the Young's contact
angle, so that the metastable CB configuration and the subsequent
infiltration process become more complex. Under these conditions, the
transition from CB to W can no longer be described by a direct
transition pathway, but proceeds through an intermediate metastable
wetting state. This motivates the computation of the full
3D solution landscape.
\begin{figure}[htbp!]
    \centering
    \includegraphics[width=0.8\textwidth, trim=0.5cm 19cm 0.5cm 0cm, clip]{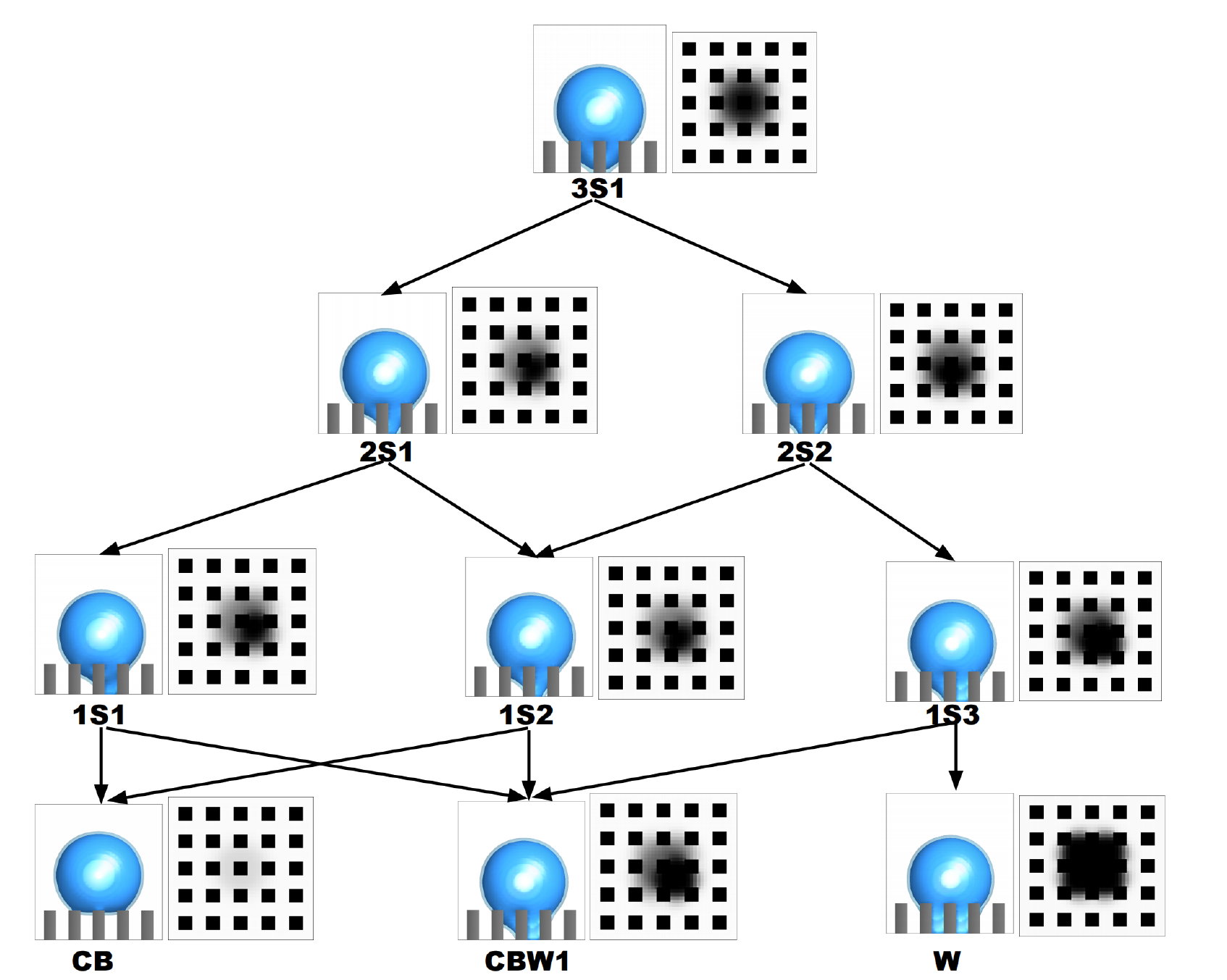}
    \caption{
3D wetting solution landscape for the nine-pillar configuration.
The landscape contains three local minima (CB, CBW1, and W), three index-1 saddle points (1S1, 1S2, and 1S3), two index-2 saddle points (2S1 and 2S2), and one index-3 saddle point (3S1).
For each stationary state, the left image shows the side view in the $xz$-plane and the right image shows the top view of the substrate; the grayscale indicates the average liquid fraction inside the grooves.
Arrows denote downward connections between stationary states of successive Morse indices.
The parameters are $\epsilon=0.015$, pillar gap $0.06$, pillar depth $0.15$, and $\theta=106^\circ$.
}
     \label{fig:3d_solution_landscape}
\end{figure}

To resolve the resulting multiple-step wetting transition mechanism, we construct the solution landscape for a wetting configuration involving nine central pillars. As shown in Figure~\ref{fig:3d_solution_landscape}, the landscape contains three local minima (CB, CBW1, and W), three index-1 saddle points (1S1, 1S2, and 1S3), two index-2 saddle points (2S1 and 2S2), and one index-3 saddle point (3S1). Their connections reveal multiple competing infiltration pathways among the metastable wetting configurations.

\section{Conclusion}

This work develops an efficient DD-ASD method for mass-constrained wetting
transitions on textured substrates. The solid geometry is represented
on a Cartesian grid, and the mass constraint is enforced through an
orthogonal projection applied to both the state and unstable-direction
dynamics. The fully discrete scheme preserves the prescribed discrete
mass and keeps the search directions in the mass-conserving subspace in exact arithmetic. After the nonphysical null modes induced by the domain mask are removed, the local convergence analysis is carried out on the effective mass-conserving subspace. Under the stated spectral and exact-eigenspace assumptions, the momentum-accelerated state iteration has a local convergence rate of $1-\mathcal{O}(1/\sqrt{\kappa})$. The numerical experiments include comparison with the reference
2D wetting landscape, step-size self-convergence, discrete mass conservation, and mesh-sensitivity tests. The efficiency of the DD-ASD method is further demonstrated by applying to 3D wetting computations.

Future work will extend the present framework to 3D
directional wetting transport on complex textured surfaces, as motivated by recent studies of geometry-controlled liquid steering \cite{feng2021capillary,dai2026nc}.
Coupling the energy landscape with hydrodynamic effects can further connect the computed transition states and barriers with actual droplet motion.

\section*{Data and Code Availability}
The data and MATLAB source code used for running the numerical experiments, performing the saddle-point computations, and generating the plots in this manuscript are openly available in a GitHub repository at \url{https://github.com/Zhang2532/DDM-ASD.git}.

\end{document}